\newcommand{\stkout}[1]{\ifmmode\text{\sout{\ensuremath{#1}}}\else\sout{#1}\fi}
\newtheorem{theorem}{Theorem}
\newtheorem{corollary}{Corollary}
\newcommand{\ketbra}[2]{|#1\rangle \langle #2|}
\def\Tr{\text{Tr}}
\begin{document}


\title{Isocoherent  Work Extraction from Quantum Batteries: Basis-Dependent Response}

\author{Shuva Mondal, Debarupa Saha, and Ujjwal Sen}
\affiliation{Harish-Chandra Research Institute,  A CI of Homi Bhabha National
Institute, Chhatnag Road, Jhunsi, Prayagraj - 211019, India}

\begin{abstract} 
We identify 
a 
connection between quantum coherence and the maximum extractable work from a quantum battery, and to this end, we define the \textit{coherence-constrained maximal work} (CCMW) as the highest  amount of work extractable via coherence-preserving unitaries, optimized over all quantum states with fixed coherence in a given dimension. For qubit systems, we derive an analytical relation between the CCMW and the input coherence, defined with respect to an arbitrary fixed basis. Strikingly, we find that for fixed quantum coherence in the energy eigenbasis, the maximal extractable work decreases with increase of coherence. In contrast, when quantum coherence is with respect to a basis for which the Hamiltonian possesses off-diagonal elements, and has equal diagonal elements, the CCMW increases with the level of quantum coherence. We numerically observe that the basis-dependent response of the CCMW also persists in higher-dimensional quantum systems. Moreover, we show that even in higher dimensions one can derive closed-form relations between the CCMW and the input quantum coherence within certain numerically-assessed conclusions. We also comment on 
the structure of passive states in an isocoherent scenario, that is, states from which no energy can be extracted under coherence-preserving unitaries.
\end{abstract}
\maketitle

\section{Introduction}



Quantum batteries~\cite{PhysRevE.87.042123,PhysRevLett.111.240401,Binder2015,Giorgi2015,PhysRevLett.118.150601,Friis2018,PhysRevLett.120.117702,PhysRevB.98.205423,PhysRevA.97.022106,Santos2019,PhysRevE.99.052106,PhysRevB.99.035421,PhysRevLett.122.047702,PhysRevB.99.205437,PhysRevLett.122.210601,PhysRevB.100.115142,Alicki2019,Ghosh_2020,Chen_2020,Caravelli_2020,Garc_a_Pintos_2020,Gherardini_2020,Kamin_2020,Juli_Farr__2020,PhysRevA.104.L030402,Zhang_2023,group_saha2023,RevModPhys.96.031001,group_shukla2025,kzvn-dj7v,6kwv-z6fx} refer to quantum systems that can store or deliver energy, governed by the laws of quantum mechanics. Just like in the classical scenario, there are two crucial aspects of quantum batteries: charging~\cite{PhysRevLett.118.150601,PhysRevLett.122.210601,Binder2015,PhysRevE.111.044118,wen2025} and discharging~\cite{PhysRevE.103.042118,PhysRevA.104.042209,PhysRevLett.134.220402}. Besides their theoretical significance, several studies such as~\cite{PhysRevA.106.032428,PhysRevA.106.042601,Rubin_2023,10025593,YANG2024102300} have also explored experimental implementations of quantum batteries.

In this paper, we mainly focus on the discharging facet of a quantum battery. Typically, discharging of a quantum battery is equivalent to energy reduction or extraction from the battery.
Whenever the discharging process is governed by a unitary operation, the change in the energy of the battery corresponds to the total amount of work performed by the battery. The maximum amount of such work that can be extracted by considering all possible unitary operations is termed \emph{ergotropy}~\cite{Allahverdyan_2004}. States of a quantum battery from which a finite amount of ergotropy can be extracted are called \emph{active states}, while those with zero ergotropy are referred to as \emph{passive states}~\cite{Lenard1978,Pusz1978,PhysRevE.90.012127,PhysRevE.91.052133,PhysRevE.92.042147,Brown2016,Sparaciari2017,PhysRevE.98.042121,PhysRevLett.123.190601}.

The performance of a quantum battery is therefore fundamentally determined by the amount of work it can deliver. Several studies have attempted to enhance this performance by exploiting quantum features intrinsic to the battery, 
including entanglement~\cite{Entr1,pro1,Entr2,GGM3} and quantum coherence~\cite{CoAberg,co1}. For instance, Refs.~\cite{PhysRevE.87.042123,PhysRevLett.118.150601,PhysRevB.100.115142,Juli_Farr__2020,PhysRevA.104.L030402,chaki1,chaki2} explored the role of entanglement in the energy extraction process, while Refs.~\cite{PhysRevLett.125.180603,PhysRevLett.129.130602,PhysRevLett.131.030402,PhysRevE.109.044119} demonstrated how the presence of quantum coherence can enhance the performance of a quantum battery. However, an exact quantitative relationship between the amount of these quantum traits and the extractable energy remains largely unexplored. Uncovering this connection could significantly deepen our understanding of quantum thermodynamics and provide valuable insights for the design of efficient quantum batteries, especially under realistic experimental constraints, where the initial quantum features are limited and costly and therefore,  the extraction protocols are required to preserve them.

In this article, we address this gap by establishing a concrete connection between a limited quantum feature intrinsic to the battery state and the amount of extractable energy, under the consideration that the extraction process preserves this feature. The feature we consider is quantum coherence, 
a fundamental trait of quantum systems, often providing advantages in performing various tasks over situations where the trait is absent. 

Specifically, we consider all quantum states of a given dimension having a fixed amount of quantum coherence in a certain basis. We then identify all unitaries that preserve quantum coherence in that basis, and find the maximal work that can be extracted by such unitaries from all states with the fixed quantum coherence. 
%
The maximization is carried out over all input states, pure or mixed, that possess the same level of quantum coherence, and over all unitary operations that preserve this coherence. 
It is important to note that the set of coherence-preserving unitaries may be state-dependent.
We refer to this coherence-constrained maximal work as CCMW, and we explore how it depends on the initial coherence of the quantum battery. Focusing first on qubit batteries, we derive a closed-form relation between the CCMW and the amount of quantum coherence specified in a fixed, but otherwise arbitrary, basis. This relation reveals that the CCMW is always achieved by pure battery states, and it further uncovers a basis-dependent behavior in the work extraction process. Specifically, when the fixed coherence is with respect to the eigenbasis of the Hamiltonian, the CCMW decreases as the initial coherence increases. In contrast, when the Hamiltonian has zero or equal diagonal elements and nonzero off-diagonal elements in the coherence basis, the CCMW increases with increasing coherence.

We demonstrate that this basis-dependent trend in battery performance persists in higher-dimensional  batteries as well. For such systems, we compute the CCMW numerically and analyze its variation with input coherence. Our numerical results show that even in higher-dimensional systems, with a fixed quantum coherence in the energy eigenbasis, it is enough to consider pure input states with fixed coherence to reach the CCMW. Building on these observations, we derive exact expressions for the CCMW–coherence relationship in qutrit batteries, and we propose a general framework for evaluating the CCMW in even larger dimensions. Furthermore, and still remaining with quantum coherence in the energy eigenbasis, we identify passive states with fixed coherence that yield zero extractable energy when the coherence must be conserved. For situations where the Hamiltonian is non-diagonal in the chosen coherence basis, informed by our qutrit-case numerical data, we again derive a closed-form relation linking CCMW to the fixed input coherence. 

The rest of the paper is organized as follows. In Sec.~\ref{sec: prelims}, we provide a brief discussion on the preliminary concepts used in this paper. Specifically, we briefly discuss the $l_1$- norm of quantum coherence in Sec~\ref{sec: l1_norm}, followed by a discussion on ergotropy and passive states in Sec.~\ref{sec: ergotropy&passive_state}. In Sec.~\ref{sec: ccmw}, we present our set-up of energy extraction and formally define CCMW. In Sec.~\ref{sec: qubit_case}, we present the qubit case and analyze how CCMW relates to that of the input quantum coherence. In Sec.~\ref{sec: qudit_system}, we extend our analysis to  higher dimensional systems, focusing on the scenario when the fixed input coherence is with respect to the energy eigenbasis, in Sec.~\ref{subsection: diagonal hamiltonian}. In Sec.~\ref{sec: passive_state}, we present the form of the passive states in our setting. The non-diagonal cases are presented in Sec.~\ref{sec: off_diagonal_hamiltonian} and Sec.~\ref{sec: non-diagonal_&_diagonal}.  And lastly we conclude in Sec.~\ref{sec: conclusion}.



\section{Preliminaries}\label{sec: prelims}
In this section, we review the preliminary concepts employed in our study, including the \(l_1\)-norm of coherence, ergotropy, and passive states. The \(l_1\)-norm of coherence is discussed in detail in the subsection below.

\subsection{$l_{1}$-norm of quantum coherence}\label{sec: l1_norm}
Quantum coherence~\cite{co1,CoAberg} is an intrinsic feature of quantum systems that separates them out from their classical counterparts. In general, quantum coherence refers to the ability of the quantum system to exist in the superposition of multiple basis states simultaneously, with well defined phase relationships between them. In recent year quantum coherence prove to be a useful and crucial resource in various quantum information processing tasks~\cite{RevModPhys.89.041003,wang2024quantumcoherencefundamentalresource}. This makes quantum coherence an intriguing topic to study. Especially how coherence governs the working of quantum technologies is a fascinating area of research. 

There are several measures to quantify quantum coherence~\cite{RevModPhys.89.041003,quantifying_coherence}. In this study, however, we employ the \(l_1\)-norm of coherence, which is defined as the sum of the absolute values of all off-diagonal elements of a quantum state expressed in a given basis.

Let a quantum state \(\rho\) defined on a \(d\)-dimensional Hilbert space \(\mathcal{H}_d\) be expressed in an orthonormal basis \(\{\ket{i}\}\), where \(i=0,1,\dots,d-1\), as
\[
\rho = \sum_{i,j} \rho_{ij}\,\ketbra{i}{j}.
\]
Then, the \(l_1\)-norm of coherence of the state \(\rho\) in that basis is defined as
\[
\text{C}\left(\rho\right) = \sum_{i \ne j} \sqrt{\rho_{ij} \rho_{ij}^{*}} = \sum_{i \ne j} |\rho_{ij}|.
\]
From now on, we will use the term “coherence” to refer to the \(l_1\)-norm of coherence, for convenience. It is important to note that coherence is a basis-dependent quantity, and the amount of quantum coherence in a system varies depending on the basis in which the state is expressed.

Having discussed quantum coherence, we now turn to the two main aspects of a quantum battery, namely the maximum extractable energy and the concept of passive states.

\subsection{Ergotropy and passive states}\label{sec: ergotropy&passive_state} 
There are two crucial concepts that govern the performance of a quantum system as a potential quantum battery~\cite{RevModPhys.89.041003}. These concepts are \emph{ergotropy}~\cite{Allahverdyan_2004,PhysRevE.87.042123} and \emph{passive states}~\cite{Lenard1978,Pusz1978}. In this section, we present a brief discussion of both.

Let \(\rho_{\text{in}}\) denote the initial state of a \(d\)-dimensional quantum system, and let \(H\) be its Hamiltonian; both \(\rho_{\text{in}}\) and \(H\) act on the Hilbert space \(\mathcal{H}_d\). Note here ``in" in the suffix of \(\rho_{\text{in}}\) is used to denote initial. The energy of the state \(\rho_{\text{in}}\) is given by \(\Tr\left[\rho_{\text{in}} H\right]\), where \(\Tr\) denotes the trace of a matrix. Now let $\Lambda$ denote a quantum operation acting on a $d$-dimensional Hilbert space $\mathcal{H}_d$. This operation maps an initial state $\rho_{\text{in}}$ to a final state $\rho_{\text{f}} = \Lambda(\rho_{\text{in}})$, defined in $\mathcal{H}_d$, where the subscript ``f'' denotes ``final''. In such a scenario the energy extractable from $\rho_{\text{in}}$ under the action of $\Lambda$ is given by
\[
E = \operatorname{Tr}\left[ H \left( \rho_{\text{in}} - \Lambda(\rho_{\text{in}}) \right) \right].
\]

If the operation $\Lambda$ is restricted to be a unitary operation $U$, the maximum extractable energy over all such unitary operations is defined as the \emph{ergotropy}, given by
\begin{equation} \label{eq:ergotropy}
\xi(\rho_{\text{in}}) = \max_{U \in \mathcal{U}} \left( \operatorname{Tr}[\rho_{\text{in}} H] - \operatorname{Tr}[U \rho_{\text{in}} U^{\dagger} H] \right),
\end{equation}
where $\mathcal{U}$ denotes the set of unitary operations on $\mathcal{H}_d$. Note that ergotropy is a convex function of the initial state. This can be shown as follows. 

Suppose the initial state can be written as a convex combination of states $\{\rho_j\}$
\[
\rho_{\text{in}} = \sum_j q_j\,\rho_j, 
\quad \text{with} \quad \sum_j q_j = 1,\quad q_j \ge 0.
\]

Using the linearity of the trace operation and the fact that the maximum of a sum is less than or equal to the sum of maxima, we have
\[
\xi( \sum_j q_j \rho_j ) \leq \sum_j q_j \xi(\rho_j).
\]
Now since any state $\rho_{\text{in}}$ can be expressed as a convex mixture of pure states $\rho_j = \ket{\psi_j}\bra{\psi_j}$, it follows that
\[
\xi(\rho_{\text{in}}) \leq \sum_j q_j \xi(\rho_j) \leq \max_j \xi(\rho_j).
\]
The above relation marks the convexity property of ergotropy.

Next, a quantum state $\rho_{\mathrm{in}}$ is called \emph{passive} if no ergotropy can be extracted from it.  Equivalently, for all unitary operations $U$,
\[
\operatorname{Tr}[\rho_{\mathrm{in}} H] \le \operatorname{Tr}[U\,\rho_{\mathrm{in}}\,U^{\dagger} H].
\]

When all unitary operations are allowed to extract energy using the Hamiltonian
\[
H = \sum_i \epsilon_i\,\ketbra{i}{i},
\]
any passive state must satisfy:

1. It is diagonal in the energy eigenbasis.
2. Its populations are non-increasing with increasing energy.

Consequently, the general form of a passive state is
\[
\sigma_{\mathrm{p}} = \sum_i s_i\,\ketbra{i}{i},
\]
where
\[
\sum_i s_i = 1,
\quad
s_i \ge s_j \quad \text{whenever} \quad \epsilon_i \le \epsilon_j.
\]

Having discussed the concept of maximum energy extraction from quantum batteries, we now proceed to our setup, wherein we investigate how the maximum extractable work relates to the coherence of the initial state. To this end, we define the maximum amount of energy extractable from a quantum system of a given dimension, maximized over all states with fixed coherence in a particular basis and all unitary operations that conserve coherence—as the coherence-constrained maximal work (CCMW). We examine how CCMW correlates with coherence. The concept of CCMW is detailed in the section below.

\section{Coherence-constrained maximal work}\label{sec: ccmw}

To explore how coherence governs the maximum amount of energy extractable from a quantum system, potentially used as a quantum battery, we consider the following setup. We fix the initial coherence of the system in a given basis and impose the constraint that the extraction process must preserve this coherence. We restrict ourselves to unitary operations.

Within this setting, we define the coherence constrained maximal work as the highest energy extractable maximized over all states with a fixed coherence level in a given basis (hereafter referred to as the coherence basis) and over all unitary operations that preserve this coherence level. It is important to note that the set of coherence-preserving unitaries may depend on the initial state.
Thus, the CCMW is given by
\begin{equation}\label{eq: our_ergotropy}
    \xi_{d}\left(\mathcal{C}\right):=
\max_{\substack{\rho_{\text{in}} \in \chi^{\mathcal{C}}_d, \\ U_{\mathcal{C}} \in \mathcal{U}_{d}^{\mathcal{C}}\left(\rho_{\text{in}}\right)}}
\left(\Tr\left[\rho_{\text{in}} H\right]-\Tr\left[U_{\mathcal{C}}\rho_{\text{in}}U_{\mathcal{C}}^{\dagger} H\right]\right).
\end{equation}
For a particular choice of basis,
here $\rho_{\text{in}}$ is an initial state with a given coherence $\mathcal{C}$ and the set of all such states is denoted by $\chi^{\mathcal{C}}_{d}$, where the subscript $d$ indicates the dimension of the quantum battery. The $U_{\mathcal{C}}$ denotes the unitary matrices which conserves the coherence of the initial state $\rho_{\text{in}}$, and set of all these unitaries is denoted as $\mathcal{U}^{\mathcal{C}}_d\left(\rho_{\text{in}}\right)$. We denote the unitaries which is used to transform the initial state during the energy extraction process and also preserve the coherence of the state on which they act, as $U_{\mathcal{C}}$. 

The restriction that the coherence remains unchanged during the allowed operations is important, because only then can we unambiguously assign a coherence label to the maximum extractable energy. Consequently, the quantity defined in~\eqref{eq: our_ergotropy} does not depend on the initial state, but only on the fixed amount of coherence.

The CCMW can be written in terms of the final state 
\[
\rho_{\mathrm{f}} = U_{\mathcal{C}}\,\rho_{\mathrm{in}}\,U_{\mathcal{C}}^{\dagger},
\]
which has the same coherence and eigenvalues as \(\rho_{\mathrm{in}}\), as
\begin{equation}\label{eq:our_ergotropy_final_state_form}
\xi_{d}(\mathcal{C})
= \max_{\{\rho_{\mathrm{in}},\rho_{\mathrm{f}}\}\in\chi^{\mathcal{C}}_{d}}
\left(
\Tr[\rho_{\mathrm{in}} H]
- \Tr[\rho_{\mathrm{f}} H]
\right),
\end{equation}
with the constraint that both \(\rho_{\mathrm{in}}\) and \(\rho_{\mathrm{f}}\) are unitarily connected and therefore have the same eigenvalues.

Having defined the CCMW, we introduce another quantity of interest:

\begin{align}
\label{eq:pure_ergotropy}
\xi^p_{d}(\mathcal{C})
&:= \max_{\substack{\rho_{\mathrm{in}} \in \zeta^{\mathcal{C}}_{d}, \\ U_{\mathcal{C}} \in \mathcal{U}^{\mathcal{C}}_d\left(\rho_{\text{in}}\right)}}
\left(
\Tr[\rho_{\mathrm{in}} H] - \Tr[U_{\mathcal{C}} \rho_{\mathrm{in}} U_{\mathcal{C}}^{\dagger} H]
\right) \\
&= \max_{\{\rho_{\mathrm{in}}, \rho_{\mathrm{f}}\} \in \zeta^{\mathcal{C}}_{d}}
\left(
\Tr[\rho_{\mathrm{in}} H] - \Tr[\rho_{\mathrm{f}} H]
\right),
\end{align}
where $\zeta^{\mathcal{C}}_{d}$ is the set of all pure states having the same coherence level $\mathcal{C}$. And superscript $p$ denotes ``pure".

Note that the only difference between $\xi_{d}(\mathcal{C})$ and $\xi^p_{d}(\mathcal{C})$ is that the latter is optimized only over pure states with fixed coherence. In general, one has
\[
\xi_{d}(\mathcal{C}) \ge \xi_{d}^p(\mathcal{C}).
\]

In the subsequent section, we show that for certain dimensions $d$ and for certain choices of the coherence basis, equality can be achieved as
\[
\xi^p_{d}(\mathcal{C}) = \xi_{d}(\mathcal{C}).
\]
In other cases, $\xi^p_{d}(\mathcal{C})$ merely provides a lower bound on the CCMW. Hence, the behavior of the CCMW can be inferred from the properties of $\xi^p_{d}(\mathcal{C})$.

With our setup and key quantities defined, we now proceed to investigate how both CCMW and $\xi^p_{d}(\mathcal{C})$ depend on the fixed initial coherence across different dimensions. We begin with the qubit case, the detailed analysis of which follows below.

\section{Qubit batteries}\label{sec: qubit_case}
In this section, we examine qubit  batteries and present a theorem that provides a closed-form relationship between the unitarily-extracted CCMW and the coherence fixed in an arbitrary basis. The theorem is presented below.

\begin{theorem}\label{theorem: 1}
Consider a qubit  battery, with $H$ being the Hamiltonian of the battery defined on the Hilbert space $\mathcal{H}_2$. Let the coherence $\mathcal{C}$ of this quantum battery be fixed in a given, but otherwise arbitrary, basis also referred to as the coherence basis $\{\ket{0}, \ket{1}\}$. Then, the unitarily extracted CCMW is given as
\begin{equation}
\label{eq: xi2}
    \xi_{2}\left(\mathcal{C} \right)=|h_{1}-h_{3}|\sqrt{1-\mathcal{C}^{2}}+2h_{2}\mathcal{C}, 
\end{equation}
where $h_{1}$ and $h_{3}$ are diagonal entries and $h_{2}$ is absolute value of off-diagonal entry of the Hamiltonian $H$, in the coherence basis.
\end{theorem}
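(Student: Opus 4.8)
The plan is to pass to the Bloch representation with respect to the coherence basis $\{\ket{0},\ket{1}\}$. First I would write $\rho_{\text{in}}=\tfrac{1}{2}\left(I+\vec r\cdot\vec\sigma\right)$ and $H=h_{0}I+\vec h\cdot\vec\sigma$, so that $h_{0}=(h_{1}+h_{3})/2$, the $z$-component of $\vec h$ is $h_{z}=(h_{1}-h_{3})/2$, and the in-plane part $\vec h_{\perp}$ has length $h_{2}$. In this language the $l_{1}$-coherence of a qubit state is the length of the in-plane part of its Bloch vector, $\mathcal{C}=2|\rho_{01}|=\sqrt{r_{x}^{2}+r_{y}^{2}}$, and the energy extracted by a unitary sending $\vec r\mapsto\vec r_{\mathrm f}$ is $\Tr[(\rho_{\text{in}}-\rho_{\mathrm f})H]=\vec h\cdot(\vec r-\vec r_{\mathrm f})$, the identity part cancelling.

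The key structural step is to characterize the coherence-preserving unitaries. A unitary acts on the Bloch vector by a rotation, hence preserves $|\vec r|$; requiring in addition that it preserve $r_{x}^{2}+r_{y}^{2}$ forces it to preserve $r_{z}^{2}$ as well. Conversely, every $\vec r_{\mathrm f}$ lying on the sphere of radius $|\vec r|$ with $\sqrt{r_{\mathrm f,x}^{2}+r_{\mathrm f,y}^{2}}=\mathcal{C}$ — equivalently $r_{\mathrm f,z}=\pm r_{z}$ with in-plane part of length $\mathcal{C}$ in an arbitrary direction — is reachable from $\vec r$ by such a rotation. Hence the optimization in Eq.~\eqref{eq:our_ergotropy_final_state_form} reduces to maximizing $\vec h_{\perp}\cdot(\vec r_{\perp}-\vec r_{\mathrm f,\perp})+h_{z}(r_{z}-r_{\mathrm f,z})$ over $\vec r_{\perp},\vec r_{\mathrm f,\perp}$ of fixed length $\mathcal{C}$, over $|\vec r|\le 1$ with $\sqrt{r_{x}^{2}+r_{y}^{2}}=\mathcal{C}$, and over the two sign choices $r_{\mathrm f,z}=\pm r_{z}$.

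Then I would perform this optimization coordinate by coordinate. The in-plane piece is maximized, independently of the rest, by aligning $\vec r_{\perp}$ with $\vec h_{\perp}$ and $\vec r_{\mathrm f,\perp}$ opposite to it, giving $2h_{2}\mathcal{C}$. For the $z$-piece the sign freedom lets us choose $r_{\mathrm f,z}=-r_{z}$, turning the contribution into $2h_{z}r_{z}$, which is maximized by pushing $|r_{z}|$ to its largest value compatible with $\sqrt{r_{x}^{2}+r_{y}^{2}}=\mathcal{C}$ and $|\vec r|\le 1$, namely $r_{z}=\mathrm{sgn}(h_{z})\sqrt{1-\mathcal{C}^{2}}$ — a pure state — giving $2|h_{z}|\sqrt{1-\mathcal{C}^{2}}=|h_{1}-h_{3}|\sqrt{1-\mathcal{C}^{2}}$. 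One checks that these choices are mutually compatible, that the required rotation between the chosen unit vectors $\vec r$ and $\vec r_{\mathrm f}$ exists, and that the same two bounds hold for every admissible pair, which gives both achievability and the matching upper bound. Summing the pieces yields Eq.~\eqref{eq: xi2}; since the optimum is attained at a pure state, the same argument also proves $\xi_{2}^{p}(\mathcal{C})=\xi_{2}(\mathcal{C})$.

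The only genuinely delicate point is the characterization of the coherence-preserving unitaries and, relatedly, checking that the maximization over $\rho_{\text{in}}$ and the minimization over $\rho_{\mathrm f}$ decouple — that the minimizing final Bloch vector is actually reachable from the maximizing initial one by a single coherence-preserving unitary. Everything after that is an elementary optimization of a linear functional on a product of two circles and an interval. The degenerate Hamiltonians $h_{1}=h_{3}$ and $h_{2}=0$ are covered as limiting cases of the same formula and require no separate treatment.
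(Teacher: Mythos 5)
Your proof is correct and is essentially the paper's argument translated into Bloch-vector language: your $r_z$ plays the role of the paper's population parameter $a$, your characterization of coherence-preserving unitaries as rotations fixing $r_z^2$ and the in-plane length matches the paper's $\sigma_x^p e^{I\beta\sigma_z}\sigma_x^q$ family (which reaches the same set of final states $b=\pm a$ with arbitrary phase), and the same decomposition into a diagonal term maximized at the pure-state extreme $|r_z|=\sqrt{1-\mathcal{C}^2}$ plus an off-diagonal term giving $2h_2\mathcal{C}$ is used. The compatibility check you flag is handled identically in the paper by noting the three terms depend on disjoint parameters.
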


\begin{proof}
Let $\chi^{\mathcal{C}}_{2}$ be the set of all states representing a qubit battery that possess the same amount of coherence $\mathcal{C}$ in a given, but otherwise arbitrary, basis $\{\ket{0},\ket{1}\}$. The subscript $2$ in $\chi^{\mathcal{C}}_{2}$ indicates the dimension of the quantum battery.
The density operator $\rho_{\mathcal{C}}^{a,\theta_1} \in \chi^{\mathcal{C}}_{2}$, corresponding to a quantum state with fixed coherence $\mathcal{C}$, is given by
\begin{align*}
\rho_{\mathcal{C}}^{a,\theta_1} =& \frac{1+a}{2}\ketbra{0}{0} + \frac{1-a}{2}\ketbra{1}{1}\\ &+\frac{\mathcal{C}}{2} e^{-i\theta_{1}} \ketbra{0}{1} + \frac{\mathcal{C}}{2} e^{i\theta_{1}} \ketbra{1}{0},
\end{align*}
where each individual state in the set $\chi^{\mathcal{C}}_{2}$ has a similar form as given above but is distinguished by different values of the parameters $a$ and $\theta_1$,
with $a \in [-\sqrt{1 - \mathcal{C}^2}, \sqrt{1 - \mathcal{C}^2}]$ and $\theta_1 \in [0, 2\pi)$.
 The Hamiltonian of the quantum battery, when written in the basis $\{\ket{0},\ket{1}\}$ is given as
 \begin{align}\label{eq: qubit_hamiltonian}
    H= h_1 \ketbra{0}{0}+h_3 \ketbra{1}{1} +h_{2}e^{-i \theta} \ketbra{0}{1}+h_{2}e^{i \theta}\ketbra{1}{0}.
\end{align} 
 To extract energy from such an initial state $\rho_{\mathcal{C}}^{a,\theta_1}$, keeping its coherence fixed in the basis $\{\ket{0},\ket{1}\}$ we perform coherence-conserving unitary operations $U_{\mathcal{C}}$.
The set of all unitary operators that conserve the coherence of the qubit battery state $\rho_{\text{in}}$ is denoted as $\mathcal{U}_{2}^{\mathcal{C}}\left(\rho_{\text{in}}\right)$, and each element of the set can be written in the form 
\begin{equation*}
U_\mathcal{C}=\sigma^p_x\text{exp}\left[I\beta\sigma_z\right]\sigma^q_x.
\end{equation*} 
Here, $I=\sqrt{-1}$. Note that the expression of $U_{\mathcal{C}}$ is independent of the initial state parameters. In other words, irrespective of the nature of the initial state considered, $U_{\mathcal{C}}$ always preserves the coherence of the initial state in the coherence basis. The operators $\sigma_x$ and $\sigma_z$ appearing in the expression of $U_{\mathcal{C}}$ are given as $\sigma_x = \ketbra{0}{1} + \ketbra{1}{0}$, $\sigma_z = \ketbra{0}{0} - \ketbra{1}{1}$, and $p, q = 0,1$. The parameter $\beta$ can take values in the range $[0, 2\pi)$. Regardless of the parameter values of $U_{\mathcal{C}}$, it can transform $\rho_{\mathcal{C}}^{a,\theta_1}$ only to states of the form
\begin{align*}
U_{\mathcal{C}}\rho_{\mathcal{C}}^{a,\theta_1}U_{\mathcal{C}}^\dagger=\rho_{\mathcal{C}}^{b,\theta_2}=& \frac{1+b}{2} \ketbra{0}{0}+\frac{1-b}{2} \ketbra{1}{1}\\
    & +\frac{\mathcal{C}}{2} e^{-i\theta_{2}} \ketbra{0}{1}+\frac{\mathcal{C}}{2} e^{-i\theta_{2}} \ketbra{1}{0},
\end{align*}
with $b=\pm a$. So, the maximization over all possible $U_{\mathcal{C}} \in \mathcal{U}^{\mathcal{C}}_{2}$ is equivalent to maximization over the final states $\rho_{\mathcal{C}}^{b,\theta_2} \in \chi^{\mathcal{C}}_{2}$. In other words 

\begin{align*}
    &\quad\max_{\rho_{\mathcal{C}}^{a,\theta_1}\in\chi^{\mathcal{C}}_{2},U_{\mathcal{C}}\in\mathcal{U}^{\mathcal{C}}_{2}}\Tr\Big[H(\rho_{\mathcal{C}}^{a,\theta_1}-U_{\mathcal{C}}\rho_{\mathcal{C}}^{a,\theta_1}U^{\dagger}_{\mathcal{C}})\Big]\hspace{-2cm}\\
&=\max_{\{\rho_{\mathcal{C}}^{a,\theta_1},\rho_{\mathcal{C}}^{b,\theta_2}\}\in\chi^{\mathcal{C}}_{2}}\Tr\Big[H(\rho_{\mathcal{C}}^{a,\theta_1}-\rho_{\mathcal{C}}^{b,\theta_2})\Big].
\end{align*}

Thus, the unitarily extracted CCMW in this setup is

\begin{align*}
\xi_{2}\left(\mathcal{C}\right) &=\max_{\{\rho_{\mathcal{C}}^{a,\theta_1},\rho_{\mathcal{C}}^{b,\theta_2}\}\in\chi^{\mathcal{C}}_{2}}\Tr[H(\rho_{\mathcal{C}}^{a,\theta_1}-\rho_{\mathcal{C}}^{b,\theta_2})]\\
&=\max_{a, b, \theta_1, \theta_2} \bigg[\left(h_{1} - h_{3}\right)\frac{a+b}{2}
+ h_{2}\mathcal{C} \cos\left(\theta - \theta_1\right) \\
& + h_{2}\mathcal{C} \cos\left(\theta - \theta_2\right) 
\bigg].
\end{align*}
The second line in the above equation is written with the understanding that, in the present case, maximization over the initial and final states is equivalent to performing the maximization over the state parameters $a$, $b$, $\theta_1$, and $\theta_2$. Note that the three terms in the expression for $\xi_{2}$ are disjoint (i.e., each term depends on different independent parameters). Hence, each term can be maximized independently. This yields
\begin{align*}
\xi_{2}\left(\mathcal{C}\right) = &\max_{a}\left((h_{1} - h_{3})a\right) + h_{2}\mathcal{C} \max_{\theta_1} \left(\cos(\theta - \theta_1)\right)\\
& + h_{2}\mathcal{C} \max_{\theta_2} \left(\cos(\theta - \theta_2)\right).
\end{align*}

Note that, to maximize the first term with respect to $b$, we set $b = a$. To further optimize this term with respect to $a$, one needs to consider the sign of $h_1 - h_3$. For instance, if $h_1 < h_3$, one should take the extreme value $a = -\sqrt{1 - \mathcal{C}^{2}}$, and if $h_1 > h_3$, one should take the maximum value $a = \sqrt{1 - \mathcal{C}^{2}}$. The last two terms are maximized when $\theta = \theta_1$ and $\theta = \theta_2$, respectively. Thus, by setting $a$, $\theta_1$, and $\theta_2$ accordingly, we obtain the general expression of the unitarily extracted CCMW as
\begin{equation*}
\xi_{2}\left(\mathcal{C}\right) = |h_{1} - h_{3}|\sqrt{1 - \mathcal{C}^{2}} + 2h_{2}\mathcal{C}.
\end{equation*}

This completes the proof of Theorem 1.
\end{proof}

It is evident from the above theorem that the optimal initial state, which yields the unitarily extracted CCMW with $a = \pm \sqrt{1 - \mathcal{C}^{2}}$, satisfies the conditions $\Tr[\rho^2] = \Tr[\rho] = 1$. Hence, this state must be a pure state. Note that, in the absence of any constraints, the maximum amount of extractable work always satisfies a convexity relation, as discussed in Sec.~\ref{sec: prelims}.

However, such convexity is not guaranteed in the presence of constraints. Yet, our findings suggest that even under a fixed coherence constraint, the optimal amount of work can still be extracted using pure initial states alone. There is no need to consider mixed quantum states. We summarize this insight in the form of the following corollary.

\begin{corollary}
To unitarily extract CCMW, the optimal initial state corresponding to a qubit battery with coherence $\mathcal{C} \in [0, 1/2]$ fixed in an arbitrary basis $\{\ket{0}, \ket{1}\}$ and having a Hamiltonian $H$ of the nature given in Eq.~\eqref{eq: qubit_hamiltonian}, must be a pure state of the form
\begin{equation*}
\ket{\psi^{i,\theta}_{\mathcal{C}}} = \sqrt{\frac{1 + \delta \sqrt{1 - \mathcal{C}^2}}{2}} \ket{0} + e^{-i \theta} \sqrt{\frac{1 - \delta \sqrt{1 - \mathcal{C}^2}}{2}} \ket{1},
\end{equation*}
where the superscript $i$ denotes the "initial" state, and $\theta$ can vary in $[0,2\pi).$

The optimal final battery state is given by
\begin{equation*}
\ket{\psi^{f,\theta}_{\mathcal{C}}} = \sqrt{\frac{1 - \delta \sqrt{1 - \mathcal{C}^2}}{2}} \ket{0} + e^{-i \theta} \sqrt{\frac{1 + \delta \sqrt{1 - \mathcal{C}^2}}{2}} \ket{1},
\end{equation*}
where the superscript $f$ denotes the "final" state.
Here, $\delta = \text{sign}(h_1 - h_3)$.
\end{corollary}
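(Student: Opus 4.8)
The plan is to read the maximizers directly off the proof of Theorem~\ref{theorem: 1} and recast them as state vectors, then verify that they lie in $\zeta^{\mathcal{C}}_{2}\subseteq\chi^{\mathcal{C}}_{2}$ and realize $\xi_2(\mathcal{C})$. That proof reduces the CCMW to independent optimizations over the population asymmetry $a$ of the initial state (whose $\ket{0}$-population is $\tfrac{1}{2}(1+a)$), the asymmetry $b=\pm a$ of the final state, and the two off-diagonal phases; the optimum is attained by pushing the initial asymmetry to the extreme value $a=\delta\sqrt{1-\mathcal{C}^2}$ with $\delta=\mathrm{sign}(h_1-h_3)$, taking the opposite extreme $b=-a$ for the final state (so that population flows from high to low energy), aligning the initial off-diagonal phase with that of $H$ and anti-aligning the final one (so that $\Tr[H\rho_{\mathrm{in}}]$ gains $+h_2\mathcal{C}$ and $\Tr[H\rho_{\mathrm{f}}]$ loses $h_2\mathcal{C}$). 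The first step is thus to substitute $a=\delta\sqrt{1-\mathcal{C}^2}$ into the density-matrix parametrization of $\chi^{\mathcal{C}}_{2}$.

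The second step is a purity check. With $a=\delta\sqrt{1-\mathcal{C}^2}$ the determinant of $\rho^{a,\theta_1}_{\mathcal{C}}$ is $\tfrac{1}{4}\big[(1-a^2)-\mathcal{C}^2\big]=\tfrac{1}{4}\big[(1-(1-\mathcal{C}^2))-\mathcal{C}^2\big]=0$, so together with unit trace this forces $\Tr[(\rho^{a,\theta_1}_{\mathcal{C}})^2]=1$: the optimal initial state is pure. A rank-one qubit operator with diagonal entries $p_{\pm}=\tfrac{1}{2}(1\pm\delta\sqrt{1-\mathcal{C}^2})$ is $\ketbra{\psi}{\psi}$ with $\ket{\psi}=\sqrt{p_{+}}\,\ket{0}+e^{-i\theta}\sqrt{p_{-}}\,\ket{1}$ for a phase $\theta$, that is, exactly $\ket{\psi^{i,\theta}_{\mathcal{C}}}$; its off-diagonal entries have magnitude $\sqrt{p_{+}p_{-}}=\tfrac{1}{2}\sqrt{(1+\delta\sqrt{1-\mathcal{C}^2})(1-\delta\sqrt{1-\mathcal{C}^2})}=\tfrac{1}{2}\sqrt{1-(1-\mathcal{C}^2)}=\mathcal{C}/2$, so its $l_1$-coherence equals $\mathcal{C}$ and $\ket{\psi^{i,\theta}_{\mathcal{C}}}\in\zeta^{\mathcal{C}}_{2}$.

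For the optimal final state I would act on $\ket{\psi^{i,\theta}_{\mathcal{C}}}$ with the coherence-preserving unitary realizing the optimal configuration ($b=-a$ with the anti-aligned phase): the population swap $\sigma_x$, dressed by a suitable $\exp[I\beta\sigma_z]$. A unitary preserves purity, and this one interchanges the two populations while leaving the off-diagonal magnitude untouched, so its image is the pure state with $p_{+}$ and $p_{-}$ swapped, namely $\ket{\psi^{f,\theta}_{\mathcal{C}}}$ (with the phase shifted as required), again of coherence $\mathcal{C}$. Substituting the pair $\{\ket{\psi^{i,\theta}_{\mathcal{C}}},\ket{\psi^{f,\theta}_{\mathcal{C}}}\}$ into $\Tr[H(\rho_{\mathrm{in}}-\rho_{\mathrm{f}})]$ returns $|h_1-h_3|\sqrt{1-\mathcal{C}^2}+2h_2\mathcal{C}=\xi_2(\mathcal{C})$ of Theorem~\ref{theorem: 1}, which certifies that this pair is optimal and closes the argument.

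There is no real analytic obstacle here, since the optimization was already carried out for Theorem~\ref{theorem: 1}; the two points needing care are (i) deducing purity from the vanishing determinant rather than merely asserting it, and (ii) the phase bookkeeping. On the latter: the off-diagonal of the optimal initial state must be aligned with that of $H$ while that of the final state must be anti-aligned (a relative $\pi$ shift), so the symbol $\theta$ appearing in $\ket{\psi^{i,\theta}_{\mathcal{C}}}$ and in $\ket{\psi^{f,\theta}_{\mathcal{C}}}$ should be understood as an independent free parameter in each, tied to the off-diagonal phase of $H$ (and differing by $\pi$ between initial and final) when $h_2\neq0$, but genuinely arbitrary when $h_2=0$ — which is what the clause "$\theta$ can vary in $[0,2\pi)$" encodes.
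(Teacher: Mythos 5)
Your proposal is correct and follows essentially the same route as the paper, which likewise treats the corollary as a direct read-off of the maximizers from Theorem~1 (optimal $a=\pm\sqrt{1-\mathcal{C}^2}$ forces $\det\rho=0$ and hence purity, with the final state obtained by the population swap). Your explicit determinant computation, the coherence check $\sqrt{p_+p_-}=\mathcal{C}/2$, and especially the remark that the initial and final off-diagonal phases must differ by $\pi$ relative to that of $H$ when $h_2\neq 0$ are all sound and in fact more careful than the paper's own one-line justification.
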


Once we have derived the exact relation between the input coherence and the unitarily extracted CCMW in Theorem 1, we would like to highlight two contrasting scenarios that explicitly demonstrate the dependence of CCMW on the choice of the coherence basis. We show how suitably choosing the coherence basis can both enhance and reduce CCMW with an increase in coherence. This marks one of the main results of our paper. In the subsequent section, we show that such a response exists even in higher dimensions. The two scenarios are presented in the form of two remarks below.

\textbf{Remark 1: } When the coherence is conserved in the energy eigenbasis, that is, the Hamiltonian is diagonal in the $\{\ket{0}, \ket{1}\}$ basis such that the off-diagonal element $h_2 = 0$, the only contribution comes from the first term of Eq.~\eqref{eq: xi2}, i.e., $\xi_2 = |h_{1} - h_{3}|\sqrt{1 - \mathcal{C}^2}$. Clearly, this is a monotonically decreasing function of coherence, suggesting that as coherence increases in the energy eigenbasis, the unitarily extracted CCMW always decreases. However, in the next remark presented below, we point out that with a suitable choice of coherence basis, one can also increase the unitarily extracted CCMW with increasing coherence.

\textbf{Remark 2:} When the Hamiltonian has non-zero off-diagonal elements and the diagonal elements are all zero or equal when written in the coherence basis, the contribution to the unitarily extracted CCMW comes only from the second term in Eq.~\eqref{eq: xi2}, i.e., $\xi_2 = 2h_{2}\mathcal{C}$. Clearly, in this case, $\xi_2$ increases linearly with coherence. This suggests that for the chosen coherence basis, one can achieve an enhancement in the unitarily extracted CCMW with increasing coherence. 

Note that the above two remarks refer to two extreme cases. In general, the unitarily extracted CCMW, as given in Eq.~\eqref{eq: xi2}, is the sum of two monotonic functions of coherence. The first term is a monotonically decreasing function of $\mathcal{C}$, arising from the diagonal part of the Hamiltonian. The off-diagonal part of the Hamiltonian contributes to the second term, which increases linearly with coherence. Therefore, in generic cases, there is a competition between the two terms, and no definitive statement can be made about the overall behavior of the unitarily extracted CCMW. This kind of behavior is also observed in higher-dimensional systems.

In the next section, we extend our discussion to higher-dimensional systems.

\section{Higher dimensional batteries} \label{sec: qudit_system}
 In this section, we consider higher-dimensional quantum batteries and analyze three distinct cases\\
 
\noindent
\textbf{Case 1:} The Hamiltonian is diagonal in the coherence basis.\\
\textbf{Case 2:} The Hamiltonian has off-diagonal elements, and all diagonal elements are either zero or equal when written in the coherence basis.\\
\textbf{Case 3:} The Hamiltonian contains both off-diagonal elements and unequal diagonal elements in the coherence basis.

We discuss \textbf{Case 1} in the following subsection.

 \subsection{When the Hamiltonian is Diagonal in Coherence Basis}\label{subsection: diagonal hamiltonian}

 In this subsection, we consider the Hamiltonian of qudit quantum batteries such that it is diagonal in the fixed basis of coherence. First, we numerically show that under such a scenario, the unitarily extracted CCMW corresponds to pure initial states. Next, considering pure initial states, we show that the CCMW always decreases with increasing coherence for arbitrary dimensions of the battery. Lastly, we present a closed-form relation between the CCMW and coherence for the case of a qutrit battery. Our analysis is presented below.

Consider a qudit battery with coherence fixed in the basis ${\ket{i}}$, where $i = 0, 1, \ldots, d-1$. We refer to this basis as the coherence basis. Let the Hamiltonian of the qudit battery in this basis be
\begin{equation}\label{eq: diagonal_hamiltonian}
J^{d}_{z} = \frac{2}{d - 1} \sum_{i = 0}^{d - 1} \left(i - \frac{d - 1}{2}\right) \ketbra{i}{i},
\end{equation}
with equispaced energy levels. Note that the choice of such a Hamiltonian is motivated by the fact that any qudit battery of dimension $d$ can be viewed as a spin-$s$ system, where $s = (d - 1)/2$. The Hamiltonian we consider is a $d$-dimensional generalization of the $z$-component of the angular momentum operator for a particle of spin-$s$. To ensure a meaningful comparison across different dimensions, we introduce a normalization factor of ${2}/({d - 1})$. This choice is deliberate because it ensures that, when the initial coherence is zero, the unitarily extracted CCMW corresponds exactly to the difference between the maximum and minimum eigenvalues of the Hamiltonian. By applying this normalization, we fix the maximum and minimum eigenvalues to $+1$ and $-1$, respectively independent of the dimension $d$. Such a normalization guarantees that the CCMW at zero coherence is always equal to $2$, regardless of the system’s dimension. This provides a consistent baseline across different battery dimensions. Starting from this common reference point, we then investigate how the CCMW varies as coherence increases, for various dimension.

Recall that, as discussed in Sec.~\ref{sec: ccmw}, the unitarily extracted CCMW for an arbitrary dimension $d$ at fixed coherence $\mathcal{C}$ is given by
\begin{equation}\label{eq: numerical_steps1}
\xi_{d}\left(\mathcal{C}\right) = \max_{\rho_{\text{in}}, \rho_{\text{f}}} \left( \Tr\left[\left(\rho_{\text{in}} - \rho_{\text{f}}\right) J^{d}_{z}\right] \right),
\end{equation}
such that the coherence of the initial and final states are equal, i.e., $C(\rho_{\text{in}}) = C(\rho_{\text{f}}) = \mathcal{C}$.
Here, the coherence for a given dimension $d$ varies in the range $\mathcal{C} \in \left[0, (d - 1) \right]$.
Thus, the problem of maximizing the CCMW is a constrained optimization problem. We perform this optimization numerically for quantum batteries of dimensions $d = 3, 4, 5, 6$, following the steps below

\begin{itemize}
    \item To construct $\rho_{\text{in}}$ and $\rho_{\text{f}}$, which are unitarily connected, we first consider states that are diagonal in the coherence basis, of the form
    \[
    \rho_{D} = \frac{\sum_{i = 0}^{d - 1} \lambda_i \ketbra{i}{i}}{\sum_{i = 0}^{d - 1} \lambda_i},
    \]
    with $\lambda_i \geq 0$ for all $i$. Clearly, the eigenvalues of $\rho_D$ are $\frac{\lambda_i}{\sum_{i} \lambda_i}$ for $i = 0, 1, \ldots, d - 1$. Here ``$D$" denotes diagonal.

    \item Next, we construct the states $\rho_{\text{in}}$ and $\rho_{\text{f}}$ from $\rho_D$ by applying two unitary operators, $U_i$ and $U_f$, respectively. This yields
    \[
    \rho_{\text{in}} = U_i \rho_D U_i^\dagger, \quad \rho_{\text{f}} = U_f \rho_D U_f^\dagger.
    \]
    Here, $U_i, U_f \in \mathbb{U}(d)$ are arbitrary unitary matrices of dimension $d$, each expressed as
\begin{align*}
U_{i/f} &= \exp\left(I \sum_{j = 0}^{d^2 - 1} \theta^j_{i/f} T_j\right),
\end{align*}
where the parameters satisfy $0 \leq \theta^j_{i/f} < 2\pi$ for all $j = 0, 1, \ldots, d^2 - 1$. The subscripts $i/f$ in $U_{i/f}$ and $\theta^j_{i/f}$ denote ``initial" and ``final", respectively.

    In this representation, $T_0=\mathbb{I}_d$ is the $d$-dimensional identity matrix, while the remaining $\{T_j\}_{j=1}^{d^2 - 1}$ are generators of the special unitary group $\mathbb{SU}(d)$. Our construction ensures that $\rho_{\text{in}}$ and $\rho_\text{f}$ are unitarily connected.

    \item Subsequently, we maximize $\Tr\left[\left(\rho_{\text{in}} - \rho_{\text{f}}\right) J^{d}_{z}\right]$ over the parameters of $\rho_D$, $U_i$, and $U_f$ to obtain the CCMW, ensuring that the coherence constraint
\[
C\left(\rho_{\text{in}}\right) = C\left(\rho_{\text{f}}\right) = \mathcal{C}
\]
is satisfied. We employ the ISRES algorithm of the Nonlinear Optimization (NLopt) library for this optimization.

\end{itemize}

After performing this numerical optimization, we observe two striking features, which we present as the following results:

\begin{description}
    \item[Result 1] For every coherence $\mathcal{C} \in \left[0, ({d - 1})/{2} \right]$, the initial state $\rho_{\text{in}}$ that delivers the CCMW is always a pure state.

    \item[Result 2] For every dimension $d = 3, 4, 5, 6$, the CCMW decreases with increase in coherence and eventually vanishes at maximum coherence.
\end{description}

\begin{figure}
\hspace{-0.8cm}
\includegraphics[scale=0.35]{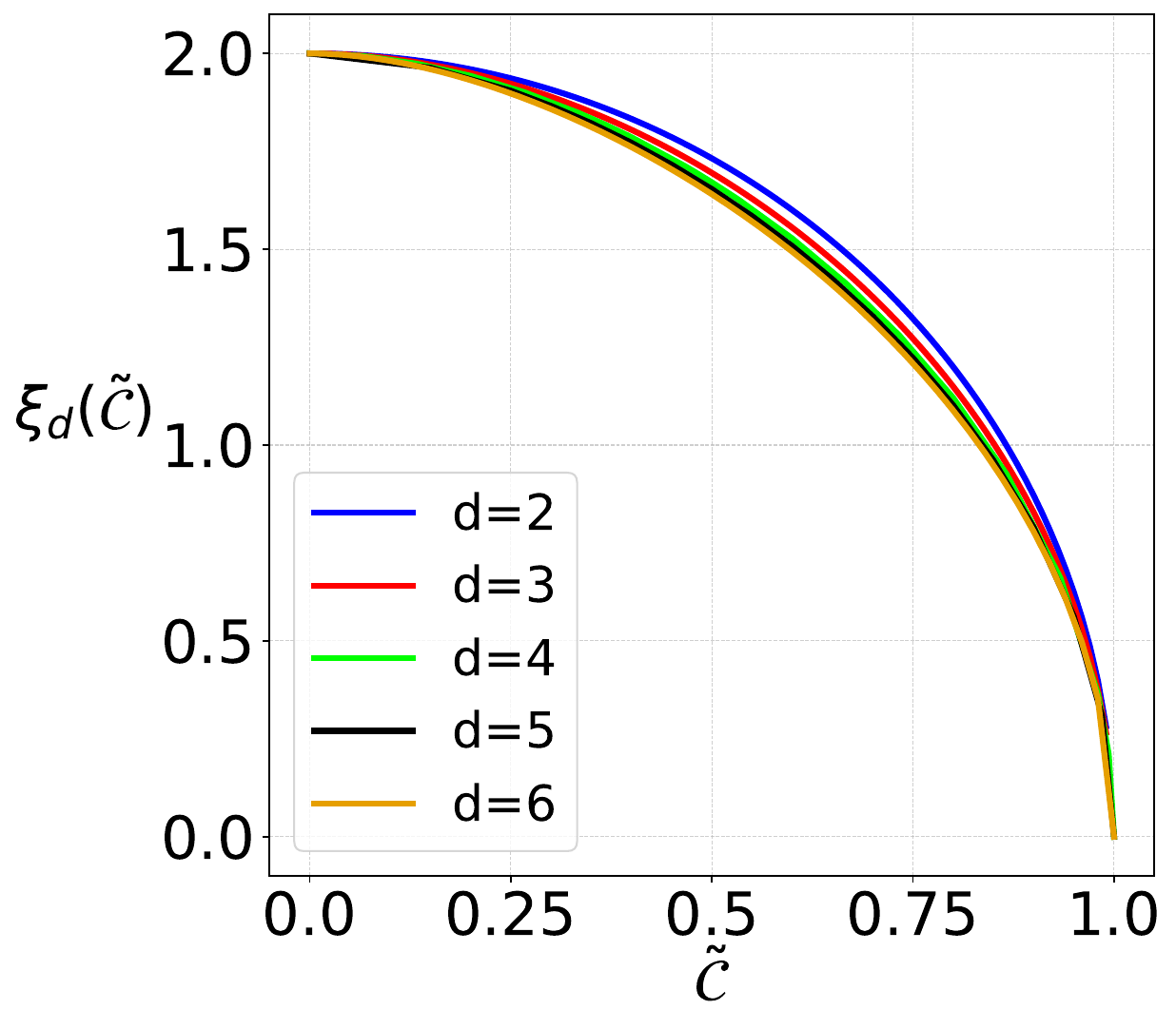}
    \caption{\textbf{Unitarily-extracted CCMW from qudit  battery with coherence, given in energy eigenbasis.} The plot depicts the variation of the CCMW, $\xi_d(\tilde{\mathcal{C}})$, with increasing scaled coherence $\tilde{\mathcal{C}}$ for battery dimensions $d = 2, 3, 4, 5, 6$, shown in different colors.  
It can be observed that, for all the considered dimensions, the CCMW decreases monotonically with increasing coherence.  
Thus, this plot illustrates that whenever coherence is fixed in the energy eigenbasis, the performance of the qudit battery deteriorates as coherence increases.  
The vertical axis in the plot has units of energy, whereas the horizontal axis is dimensionless.}
 \label{fig: ergotropy_diagonal}
\end{figure} 
To demonstrate Result 2 and Remark 1 in Fig.~\ref{fig: ergotropy_diagonal}, we plot the variation of the CCMW with coherence for $d = 2, 3, 4, 5, 6$.  
Since the maximum possible coherence of a $d$‑dimensional quantum battery is $\mathcal{C}_{\mathrm{max}}^{d} = (d - 1)/2$, the range of coherence values naturally depends on the battery dimension.  
This makes it difficult to compare the CCMW behavior across different dimensions on the same footing. To enable a dimension-independent and unified visualization, we rescale the coherence so that it always ranges from 0 to 1, regardless of $d$.  
This is achieved by dividing the coherence $\mathcal{C}$ by its maximum value $\mathcal{C}_{\mathrm{max}}^{d}$, giving the scaled coherence
\[
\tilde{\mathcal{C}} = \frac{\mathcal{C}}{\mathcal{C}_{\mathrm{max}}^{d}}.
\] 
Accordingly, the CCMW can now be viewed as a function of the scaled coherence, written as $\xi_{d}(\tilde{\mathcal{C}})$.
 
In Fig.~\ref{fig: ergotropy_diagonal} we plot $\xi_{d}(\tilde{\mathcal{C}})$ versus $\tilde{\mathcal{C}}$ for battery dimensions $d = 3, 4, 5, 6$.  
As seen in the plot, for all dimensions considered, $\xi_{d}(\tilde{\mathcal{C}})$ decreases monotonically with $\tilde{\mathcal{C}}$ and vanishes at maximum coherence, $\tilde{\mathcal{C}} = 1$.  
This confirms that when the coherence of the quantum battery is fixed in the Hamiltonian basis, the CCMW always deteriorates as the coherence increases.

Combining the observations in Results~1 and~2, we formulate the following theorem, which guarantees that for every battery dimension $d$, if the coherence is fixed in the Hamiltonian basis during the extraction of coherence‑constrained work, and the maximization is performed over only pure initial states, the extracted work invariably decreases with increasing coherence. The theorem is stated below.
    

\begin{theorem}
If only pure states with a given amount of coherence $\mathcal{C}$, fixed in the energy eigenbasis, are considered for the extraction of energy from a qudit battery of dimension $d$, under coherence-conserving unitaries $U_\mathcal{C}$, then the optimal work, maximized over all such unitaries and pure initial states, always decreases monotonically with increasing coherence $\mathcal{C} \in [0, d - 1]$, and eventually vanishes at maximum coherence $\mathcal{C} = d-1$.
\end{theorem}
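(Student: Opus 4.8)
The plan is to collapse the definition in Eq.~\eqref{eq:pure_ergotropy} to a one-parameter optimization problem and to prove monotonicity of that problem's value without ever solving it explicitly. First I would observe that, within the pure-state class, the coherence-conservation constraint is essentially vacuous: every pure state can be mapped to any other by a unitary, and such a unitary trivially conserves the coherence of a rank-one input whenever the output has the same coherence, so ``$U_{\mathcal C}\in\mathcal U^{\mathcal C}_d(\rho_{\text{in}})$'' imposes nothing beyond ``the output has coherence $\mathcal C$.'' Hence
\[
\xi^p_d(\mathcal C)=E_{\max}(\mathcal C)-E_{\min}(\mathcal C),
\]
where $E_{\max}(\mathcal C)$ and $E_{\min}(\mathcal C)$ are the largest and smallest values of $\braket{\psi|J^d_z|\psi}$ over pure $\ket\psi$ with coherence $\mathcal C$. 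Writing $\ket\psi=\sum_i r_ie^{i\phi_i}\ket i$ with $r_i\ge0$, the energy depends only on $p_i:=r_i^2$, while $\mathcal C=(\sum_i r_i)^2-1=(\sum_i\sqrt{p_i})^2-1$; so ``coherence $=\mathcal C$'' is exactly $\sum_i\sqrt{p_i}=\sqrt{1+\mathcal C}$, and $\sqrt{1+\mathcal C}$ sweeps $[1,\sqrt d]$ as $\mathcal C$ sweeps $[0,d-1]$.

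Next I would introduce the profile function, for $t\in[1,\sqrt d]$,
\[
g(t):=\max_{\mathbf p\in P_t}\ \sum_i\epsilon_i p_i,
\]
where $\epsilon_i=\tfrac{2i}{d-1}-1$ are the eigenvalues of $J^d_z$ and $P_t$ is the (compact, nonempty) set of probability vectors $\mathbf p=(p_0,\dots,p_{d-1})$ with $\sum_i\sqrt{p_i}=t$. Then $E_{\max}(\mathcal C)=g(\sqrt{1+\mathcal C})$, and since the spectrum is symmetric, $\epsilon_{d-1-i}=-\epsilon_i$, reversing amplitude labels gives $E_{\min}(\mathcal C)=-g(\sqrt{1+\mathcal C})$. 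Therefore $\xi^p_d(\mathcal C)=2\,g(\sqrt{1+\mathcal C})$, and because $\mathcal C\mapsto\sqrt{1+\mathcal C}$ is strictly increasing, it suffices to show $g$ is strictly decreasing on $[1,\sqrt d]$ and that $g(\sqrt d)=0$.

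To prove $g$ decreasing I would use interpolation toward a point mass together with the intermediate value theorem, which sidesteps having to characterize the optimizer. Fix $1\le t_1<t_2\le\sqrt d$ and let $\mathbf p^{*}$ attain $g(t_2)$; note $g(t_2)\le\epsilon_{d-1}=1$. Along the segment $\mathbf q(s)=(1-s)\mathbf p^{*}+s\,\mathbf e_{d-1}$, $s\in[0,1]$, toward the population profile $\mathbf e_{d-1}$ of $\ket{d-1}$, the energy is $(1-s)g(t_2)+s\ge g(t_2)$, while $s\mapsto\sum_i\sqrt{q_i(s)}$ is continuous and runs from $t_2$ to $1$; by the intermediate value theorem it equals $t_1$ at some $s^{*}\in(0,1]$, and $\mathbf q(s^{*})\in P_{t_1}$ has energy $\ge g(t_2)$, so $g(t_1)\ge g(t_2)$. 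Strictness follows because $g(t_2)<1$ for $t_2>1$ (equality would force $\mathbf p^{*}=\mathbf e_{d-1}$, which has $\sum_i\sqrt{p_i}=1\ne t_2$), whence $(1-s^{*})g(t_2)+s^{*}>g(t_2)$, and moreover $g(1)=1>g(t_2)$. Finally, at $t=\sqrt d$ equality in Cauchy--Schwarz, $\sum_i\sqrt{p_i}\le\sqrt d\,\sqrt{\sum_i p_i}$, forces $\mathbf p$ uniform, so $g(\sqrt d)=\tfrac1d\sum_i\epsilon_i=0$ and $\xi^p_d(d-1)=0$.

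The individual calculations are all elementary; the crux is the reduction in the first step --- recognizing that within the pure-state class coherence conservation leaves the admissible unitaries as ``anything reaching a state of the same coherence'' --- together with the realization in the third step that monotonicity of $g$ follows from a convexity/IVT argument and does not require the explicit, genuinely $d$-dependent, piecewise optimal population vector. The main pitfall I would expect is attempting to solve the constrained maximization defining $g(t)$ directly, which is exactly what the paper carries out only in the qutrit case; avoiding that is what makes the general theorem tractable.
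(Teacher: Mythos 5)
Your proposal is correct, and it shares the paper's opening reduction --- for rank-one inputs the coherence-preserving-unitary constraint collapses to ``initial and final pure states both have coherence $\mathcal{C}$,'' the phases drop out for a diagonal Hamiltonian, and the constraint becomes $\sum_i\sqrt{p_i}=\sqrt{1+\mathcal{C}}$, exactly Eqs.~(13)--(15) of the paper --- but from there the two arguments genuinely diverge. The paper recasts the feasible set as a family of ``isocoherent ellipses'' (sphere--hyperplane intersections, rescaled by $\sqrt{\smash[b]{\Delta\epsilon^{d}_{0,i}}}$) and argues that because these ellipses shrink as $\mathcal{C}$ grows, the gap between the maximal and minimal squared distances to the origin shrinks and collapses to zero at $\mathcal{C}=d-1$; the ``smaller ellipse implies smaller gap'' step is geometric and somewhat heuristic, since the ellipses for different $\mathcal{C}$ are neither nested nor concentric. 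Your interpolation-plus-IVT argument replaces precisely that step with a constructive one: from an optimizer at coherence level $t_2$ you exhibit, for every smaller $t_1$, a feasible population vector with strictly larger energy, which gives a fully rigorous and more elementary proof of monotonicity; your Cauchy--Schwarz endpoint argument is the same ``ellipse collapses to a point'' fact in analytic dress. One caveat you should repair before this stands as a proof of the theorem as stated: you specialize to the equispaced, symmetric spectrum $\epsilon_i=\tfrac{2i}{d-1}-1$ and use $\epsilon_{d-1-i}=-\epsilon_i$ to write $E_{\min}=-E_{\max}=-g$, whereas the theorem (and the paper's proof, which works with a general diagonal $\bar{H}_d$) makes no such symmetry assumption. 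The fix is immediate --- treat $E_{\max}$ and $E_{\min}$ separately, interpolating toward the highest- and lowest-energy basis states respectively to show the former decreases and the latter increases in $\mathcal{C}$, with both meeting at the uniform state when $\mathcal{C}=d-1$ --- but it needs to be said.
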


\begin{proof}
Let the coherence of the qudit battery be fixed in the energy eigenbasis $\{\ket{i}\}$, where $i = 0, 1, 2, \ldots, d-1$. Suppose the Hamiltonian of the battery is given by
\begin{equation}
\label{PHD}
\bar{H}_d = \sum_i \epsilon_i^d \ketbra{i}{i}.
\end{equation}
Where for convenience, we have organized the basis $\{\ket{i}\}$ in such a way that $\epsilon_{i}^d\geq \epsilon_{j}^d$, for $i\leq j.$  We define  $\Delta \epsilon_{i,j}^d=\epsilon^{d}_{j}-\epsilon^{d}_{i}$.
Note that the Hamiltonian $J^{d}_{z}$ defined in Eq.~\eqref{eq: diagonal_hamiltonian} is a special case of $\bar{H}_d$, with $\epsilon^{d}_0 = -1$, $\epsilon^{d}_{d-1} = 1$, and $\Delta \epsilon^{d}_{i,j}= \Delta \epsilon^{d}_{l,m}$ for all $i,j \neq l,m$.

Let $\zeta^{\mathcal{C}}_d$ denote the set of all pure states $\rho_{\text{in}} = \ketbra{\psi_{\text{in}}^d}{\psi_{\text{in}}^d}$ of a $d$-dimensional quantum battery, where the coherence is fixed at $\mathcal{C}$. Energy is extracted from $\rho_{\text{in}}$ via coherence-conserving unitaries $U_\mathcal{C} \in \mathcal{U}_d^\mathcal{C}(\rho_{\text{in}})$, where $\mathcal{U}_d^\mathcal{C}(\rho_{\text{in}})$ denotes the set of all unitaries that conserve the coherence of $\rho_{\text{in}}$. Let the final state of the battery after energy extraction be $\rho_{\text{f}}$. 

Then, the optimal energy (work) extracted, maximized over all such choices of $\rho_{\text{in}}$ and corresponding $U_\mathcal{C}$, is given by
\begin{equation}
\xi_d^p(\mathcal{C}) = \max_{\rho_{\text{in}} \in \zeta^{\mathcal{C}}_d, \, U_\mathcal{C} \in \mathcal{U}_d^\mathcal{C}\left(\rho_{\text{in}}\right)} \Tr[\bar{H}_d (\rho_{\text{in}} - \rho_{\text{f}})].
\end{equation}
Here, the superscript $p$ in $\xi_d^p(\mathcal{C})$ denotes that the optimization is restricted to pure states with fixed coherence, $\mathcal{C}$. 

It is important to mention that the Result 1 suggest that for $d = 3,4,5,6$ and under the assumption $\bar{H}_d = J^{d}_z$, $\xi_d^p(\mathcal{C})$ coincides CCMW $\xi_d(\mathcal{C})$. However, for $d \geq 6$, and even for $d = 3,4,5,6$ if $\bar{H}_d$ is not equal to $J^{d}_z$, this identification may no longer hold.

Any arbitrary pure state $\ket{\psi_d}$ can be expressed in the basis $\{\ket{i}\}$ as 
\begin{equation}\label{eq: states}
\ket{\psi_d} = \sum_{i=0}^{d-1} x_i e^{i \theta_i} \ket{i}, \quad 0 \leq x_i \leq 1,
\end{equation}
with the normalization condition 
\begin{equation}\label{eq: normalization condition}
\sum_{i=0}^{d-1} x_i^2 = 1.
\end{equation}
If this state has coherence $\mathcal{C}$, then $\{x_i\}$ must also satisfy 
\begin{equation}\label{eq: fixed coherence condition1}
\sum_{i \neq j} x_i x_j = \mathcal{C}.
\end{equation}
Combining Eq.~\eqref{eq: normalization condition} with Eq.~\eqref{eq: fixed coherence condition1}, we can rewrite the coherence constraint as 
\begin{equation}\label{eq: fixed coherence condition2}
\left(\sum_{i=0}^{d-1} x_i \right)^2 = 1 + \mathcal{C} \quad \Rightarrow \quad \sum_{i=0}^{d-1} x_i = \sqrt{1 + \mathcal{C}}.
\end{equation}

So, any $d$-dimensional pure state written in the orthonormal coherence basis, as in Eq.~(\ref{eq: states}), must satisfy both Eq.~(\ref{eq: normalization condition}) and Eq.~(\ref{eq: fixed coherence condition2}). Geometrically, the points $\{x_{i}\}$ that simultaneously satisfy Eq.~(\ref{eq: normalization condition}) and Eq.~(\ref{eq: fixed coherence condition2}) must lie on the $d$-dimensional curve formed by the intersection of the higher-dimensional sphere given by Eq.~(\ref{eq: normalization condition}) and the higher-dimensional plane given by Eq.~(\ref{eq: fixed coherence condition2}). We can use the constraint in Eq.~\eqref{eq: fixed coherence condition2} to eliminate one of the coordinates, say $x_0$, from our analysis. Doing so gives the projection of the intersection curve between the sphere and the plane onto the subspace orthogonal to the $x_0$-axis. The compact equation that describes this projection is
\begin{equation}\label{eq: ellipse equation}
    \sum_{i=1}^{d-1}x_{i}^{2}+\sum_{i>j=1}^{d-2}x_{i}x_{j}-\sqrt{1+\mathcal{C}}\sum_{i=1}^{d-1}x_{i}+\frac{\mathcal{C}}{2}=0.
\end{equation}

This is an equation of an ellipse in the $(d-1)$ dimensional space with axes $\{x_{1},x_{2},...x_{d-1}\}$, for a fixed value of $\mathcal{C}\in[0,d-1]$. We call these higher-dimensional ellipses as isocoherent ellipses prior to the fact that every point on the ellipse corresponds to pure $d$ dimensional quantum states having equal level of coherence. 
It is worthy to note that the distance of the hyperplane in Eq.~\eqref{eq: fixed coherence condition2}, from the origin is $\sqrt{{(1+\mathcal{C})}/{d}}$. So, when coherence level increases from zero to $d-1$, the higher-dimensional plane becomes more distant from the origin. As a result of this, the area of the curve resulting from the intersection of the unit hypersphere in Eq.~\eqref{eq: normalization condition} and the higher-dimensional plane in Eq.~\eqref{eq: fixed coherence condition2} becomes smaller. As a result, the higher dimensional ellipse in Eq.~\eqref{eq: ellipse equation}, which is just the projection of the hypersphere-higher dimensional plane intersection curve, becomes smaller and smaller. And at maximum attainable coherence, the ellipse becomes a single point. This corresponds to the higher-dimensional plane in Eq.~(\ref{eq: fixed coherence condition2}) just touching the sphere of Eq.~(\ref{eq: normalization condition}) at a single point. As an illustration of the above discussion considering qutrit battery, we plot some of these isocoherent ellipses as shown in Fig.~\ref{fig: ellipses_ergotropy_diagonal}, for various values of coherence. As it can be seen in the Fig~\ref{fig: ellipses_ergotropy_diagonal} as the value of coherence increase the area of the coherence ellipses decreases, and eventually ceases to a point at maximum coherence $\mathcal{C}=2$.
\begin{figure}
\hspace{-1.5cm}
\includegraphics[scale=0.78]{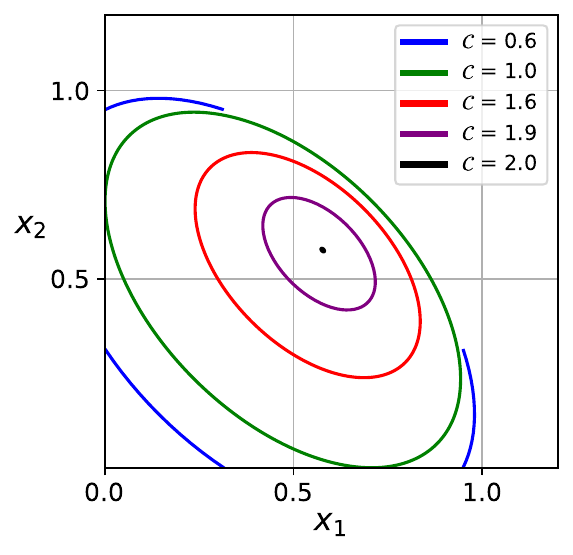}
   \caption{\textbf{Isocoherence ellipses for a qutrit system.} The figure shows isocoherent ellipses corresponding to several fixed coherence values \(\mathcal{C}\) for a qutrit system. Each ellipse is plotted in a different color according to \(\mathcal{C}\) and is defined by the state parameters \(x_{1}\), \(x_{2}\), and the fixed coherence \(\mathcal{C}\). As \(\mathcal{C}\) increases, the area of each ellipse decreases and eventually vanishes at maximum coherence \(\mathcal{C}=2\). Both axes in the plot are dimensionless.}

    \label{fig: ellipses_ergotropy_diagonal}
\end{figure}

Once we became familiar with the isocoherent curve. We use them to find  $\xi^{p}_{d}$ as follows.

Let the initial states with fixed coherence in the basis $\{\ket{i}\}$ be given as 
\begin{equation}\label{eq: pure_in_state}
    \ket{\psi_\text{in}^d}=\sum_{i=0}^{d-1}v_{i}e^{\theta_{i}}\ket{i}.
\end{equation}
 After energy extraction from this state under coherence-preserving unitaries, the final state is given as
\begin{equation}\label{eq: pure_out_state}
    \ket{\psi_\text{f}^d}\sum_{i=0}^{d-1}w_{i}e^{\phi_{i}}\ket{i}.
\end{equation} 

Note that both $\ket{\psi_\text{in}^d}$ and $\ket{\psi_\text{f}^d}$ are states of fixed coherence $\mathcal{C}$. Therefore the coordinates $\{v_i\}$ and $\{w_i\}$ with $i=1,2,\ldots,d-1$ must satisfy the equation of the projected higher dimensional ellipse as given in Eq.~\eqref{eq: ellipse equation}. Accordingly we have 
\begin{equation}\label{eq: with_delta_eq}
\xi^{p}_{d}\left(\mathcal{C}\right)=\max_{\{v_{i},v_{i}\}}\left[\sum_{i=1}^{d-1}\Delta \epsilon_{0,i}^d v_{i}^2-\sum_{i=1}^{d-1}\Delta \epsilon_{0,i}^d w_{i}^2\right].
\end{equation}
Where the maximization is performed over the coordinate sets $\{v_i\}$ and $\{w_i\}$, such that for $i=1,2,\ldots,d-1$ the points in each of the coordinate sets individually satisfies Eq.~\eqref{eq: ellipse equation}. 

 Note that in Eq.~\eqref{eq: with_delta_eq} the first term depends only on $\{v_i\}$ and second term depends on $\{w_i\}$. This gives us the the freedom to optimize the two terms individually. Thus we have 
 \begin{equation}
\xi^{p}_{d}\left(\mathcal{C}\right)=\max_{\{v_{i}\}}\left[\sum_{i=1}^{d-1}\Delta \epsilon_{0,i}^d v_{i}^2\right]-\min_{w_i}\left[\sum_{i=1}^{d-1}\Delta \epsilon_{0,i}^d w_{i}^2\right].
\end{equation}

To further simplify the expression, we perform a change of variables given by
\begin{equation}
X_{i} = \sqrt{\Delta \epsilon_{0,i}^d} \, v_{i}, \quad Y_{i} = \sqrt{\Delta \epsilon_{0,i}^d} \, w_{i}.
\end{equation}
After this transformation, the unitarily extracted CCMW takes the form
\begin{equation}\label{eq: scaled ergotropy}
    \xi^{p}_{d}\left(\mathcal{C}\right) = \max_{\{X_{i}\}} \sum_{i=1}^{d-1} X_{i}^2 - \min_{\{Y_{i}\}} \sum_{i=1}^{d-1} Y_{i}^2,
\end{equation}
where the sets \(\{X_{i}\}\) and \(\{Y_{i}\}\) satisfy the constraint
\small
\begin{equation*}
    \sum_{i=1}^{d-1} \frac{Z_{i}^{2}}{\Delta \epsilon_{0,i}^d}
    + \sum_{i>j=1}^{d-2} \frac{Z_{i} Z_{j}}{\sqrt{\Delta \epsilon_{0,i}^d \Delta \epsilon_{0,j}^d}}
    - \sqrt{1 + \mathcal{C}} \sum_{i=1}^{d-1} \frac{Z_{i}}{\sqrt{\Delta \epsilon_{0,i}^d}}
    + \frac{\mathcal{C}}{2} = 0,
\end{equation*}
\normalsize
with \(Z_i = X_i\) or \(Y_i\).

The above constraint defines a \emph{scaled ellipse}, corresponding to each equal-coherence ellipse described in Eq.~\eqref{eq: ellipse equation}. Importantly, the scaling preserves the qualitative features of the original ellipses specifically, the area of the scaled ellipses shrinks with increasing \(\mathcal{C}\), and vanishes at maximum coherence.

We observe that the sets $\{X_{i}\}$ and $\{Y_{i}\}$ represent two points on the same ellipse, since the corresponding coherence values are identical. Thus, the coordinate transformation implies that $\xi^{p}_{d}(\mathcal{C})$, for a fixed $\mathcal{C}$, as given in Eq.~(\ref{eq: scaled ergotropy}), corresponds to the difference between the squares of two optimized Euclidean distances in a $(d{-}1)$-dimensional space. The first is the \emph{maximum} distance from the origin $(X_i, Y_i = 0\,\, \forall i)$ to a point on the scaled equal-coherence ellipse corresponding to coherence $\mathcal{C}$, and the second is the \emph{minimum} such distance. From our earlier discussion, we know that as the coherence $\mathcal{C}$ increases, the area of the corresponding scaled equal-coherence ellipse shrinks. Consequently, as the ellipse contracts with increasing $\mathcal{C}$, the difference between these optimized distances and thus $\xi^{p}_{d}(\mathcal{C})$ decreases. At the point of maximum coherence, the ellipse collapses into a single point, and this difference becomes zero. This can also be understood from the fact that there exists only one pure state of maximum coherence. Therefore, the initial and final states in this scenario are identical, implying that no energy can be extracted when $\mathcal{C} = d - 1$. This completes the proof of Theorem~2.
\end{proof} 

Next, considering the Hamiltonian $J^{d}_z$ of the form given in Eq.~\eqref{eq: diagonal_hamiltonian}, and building upon the framework developed in Theorem 2, we derive a closed form expression for $\xi_d^p(\mathcal{C})$ which, in this context, is equivalent to the CCMW as a function of fixed coherence $\mathcal{C} \in [0,2]$. While the same procedure can, in principle, be extended to higher dimensions ($d > 3$), the analytical complexity increases significantly with $d$. To gain concrete insights and maintain analytical tractability, we focus here on the qutrit ($d = 3$) case.  The exact steps and the expression for CCMW for the qutrit case is presented in the theorem below.
\begin{theorem}
    Considering pure qutrit battery of fixed coherence $\mathcal{C}\in[0,2]$ and the Hamiltonian of the form given in Eq.~\eqref{eq: diagonal_hamiltonian} the CCMW $\xi_3(\mathcal{C})$ can be expressed in terms of fixed coherence $\mathcal{C}$ as
    \begin{equation*}
\xi_{3}(\mathcal{C})=\sqrt{\left(1+f_{3}+\frac{\mathcal{C}}{3}\right)\left(1+f_{3}-\mathcal{C}\right)},
\end{equation*}
where $f_{3}=\sqrt{1+\mathcal{C}}\sqrt{1-\frac{\mathcal{C}}{3}}$.
\end{theorem}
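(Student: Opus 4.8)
The plan is to run the machinery of Theorem~2 to completion in the special case $d=3$ with $\bar H_3 = J^{3}_{z}$, where the energies are $\epsilon_0=-1,\ \epsilon_1=0,\ \epsilon_2=1$, so $\Delta\epsilon_{0,1}^{3}=1$ and $\Delta\epsilon_{0,2}^{3}=2$. Invoking Result~1 to identify $\xi_3(\mathcal{C})$ with $\xi^{p}_{3}(\mathcal{C})$ over the range of interest, and reading Eq.~\eqref{eq: scaled ergotropy} back in the original (un-rescaled) coordinates, Theorem~2 reduces the task to
\[
\xi_{3}(\mathcal{C}) = \max_{(v_1,v_2)}\bigl(v_1^{2}+2v_2^{2}\bigr)-\min_{(w_1,w_2)}\bigl(w_1^{2}+2w_2^{2}\bigr),
\]
where $(v_1,v_2)$ and $(w_1,w_2)$ each lie on the $d=3$ isocoherent ellipse of Eq.~\eqref{eq: ellipse equation}, i.e.\ $v_1^{2}+v_2^{2}+v_1 v_2-\sqrt{1+\mathcal{C}}\,(v_1+v_2)+\tfrac{\mathcal{C}}{2}=0$. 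Since both optimizations run over the same curve, everything collapses to computing $\max\phi-\min\phi$ for the single quadratic $\phi(v_1,v_2)=v_1^{2}+2v_2^{2}$ on this ellipse.

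First I would center the ellipse. Its center is $(\tfrac{s}{3},\tfrac{s}{3})$ with $s=\sqrt{1+\mathcal{C}}$, and the substitution $v_j=\tfrac{s}{3}+p_j$ puts it into the form $p_1^{2}+p_2^{2}+p_1 p_2=R$ with $R=\tfrac{2-\mathcal{C}}{6}$. Because the center lies on the line $v_1=v_2$, which is also a principal axis of $p_1^{2}+p_2^{2}+p_1 p_2$, the parametrization decouples cleanly: rotating to $\alpha=(p_1+p_2)/\sqrt2,\ \beta=(p_1-p_2)/\sqrt2$ gives $\tfrac32\alpha^{2}+\tfrac12\beta^{2}=R$, and with $\alpha=\sqrt{2R/3}\cos t,\ \beta=\sqrt{2R}\sin t$ one obtains, after routine simplification (eliminating $\alpha^{2}+\beta^{2}$ through the constraint and collapsing the trigonometric terms via angle-addition identities), $\phi(t)=\mathrm{const}+a\cos\tau-b\sin\tau\cos\tau$, where $\tau=t+\tfrac{\pi}{6}$, $a=\tfrac{4s\sqrt R}{3}$ and $b=\tfrac{4R}{\sqrt3}$.

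Next comes the one-variable optimization. Setting $d\phi/d\tau=0$ yields $2b\sin^{2}\tau-a\sin\tau-b=0$, whose two roots $\sigma_\pm$ satisfy $\sigma_+\sigma_-=-\tfrac12$. A one-line estimate shows $a>b>0$ for every $\mathcal{C}\in(0,2)$ (indeed $a>b\Leftrightarrow s^{2}>3R\Leftrightarrow\mathcal{C}>0$), which forces $\sigma_+>1$, so that among the two roots only $\sigma_-\in(-\tfrac12,0)$ is attainable by $\sin\tau$; hence $\phi$ has exactly two critical points, at $\sin\tau=\sigma_-$ with $\cos\tau=\pm\sqrt{1-\sigma_-^{2}}$, and since the compact curve attains its extrema at critical points we get $\{\max\phi,\min\phi\}=\{\mathrm{const}\pm\sqrt{1-\sigma_-^{2}}\,(a-b\sigma_-)\}$, i.e.
\[
\xi_{3}(\mathcal{C})=2\sqrt{1-\sigma_-^{2}}\,(a-b\sigma_-),\qquad \sigma_-=\frac{a-\sqrt{a^{2}+8b^{2}}}{4b}.
\]
The endpoints are handled directly: at $\mathcal{C}=2$ the ellipse degenerates to a point and $\xi_3=0$ (consistent with $R=0$), while at $\mathcal{C}=0$ the extra root $\sigma_+=1$ is only a degenerate critical point contributing an interior value of $\phi$, so the formula still applies.

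The final and most laborious step is to turn $2\sqrt{1-\sigma_-^{2}}\,(a-b\sigma_-)$ into the stated product. The trick is not to insert the surd for $\sigma_-$ prematurely: the defining relation $2b\sigma_-^{2}=a\sigma_-+b$ gives $1-\sigma_-^{2}=\tfrac{b-a\sigma_-}{2b}$ and $(a-b\sigma_-)^{2}=a^{2}+\tfrac{b^{2}}{2}-\tfrac{3ab}{2}\sigma_-$, so $\xi_3^{2}$ becomes a low-degree polynomial in $\sigma_-$; substituting $\sigma_-$ once, together with the key evaluation $a^{2}+8b^{2}=\tfrac{8(2-\mathcal{C})(3-\mathcal{C})}{9}$, makes the spurious $(2-\mathcal{C})^{2}$ factors cancel and the polynomial part reduce to $18(3-\mathcal{C}^{2})$, leaving $\xi_3^{2}=\tfrac23\bigl[(3-\mathcal{C}^{2})+(3-\mathcal{C})f_3\bigr]$ with $f_3=\sqrt{(1+\mathcal{C})(3-\mathcal{C})/3}=\sqrt{1+\mathcal{C}}\sqrt{1-\mathcal{C}/3}$. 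A short algebraic check---expand $(1+f_3+\tfrac{\mathcal{C}}{3})(1+f_3-\mathcal{C})$ and use $f_3^{2}=1+\tfrac{2\mathcal{C}}{3}-\tfrac{\mathcal{C}^{2}}{3}$---shows this equals the claimed expression, finishing the proof. I expect this nested-radical bookkeeping to be the main obstacle; by comparison, the reduction to a single-variable trigonometric problem and the counting of critical points are routine.
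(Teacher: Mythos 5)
Your proposal is correct: I verified the centering ($R=(2-\mathcal{C})/6$), the reduction to $\phi(\tau)=\mathrm{const}+a\cos\tau-b\sin\tau\cos\tau$ with $a=4s\sqrt{R}/3$, $b=4R/\sqrt{3}$, the root-counting ($a>b$ for $\mathcal{C}>0$ forces $\sigma_+>1$), the identity $a^{2}+8b^{2}=8(2-\mathcal{C})(3-\mathcal{C})/9$, and the final closed form, which matches the theorem numerically (e.g.\ at $\mathcal{C}=1$ both give $\xi_3\approx 1.695$; at $\mathcal{C}=0$ both give $2$). The overall strategy is the same as the paper's — restrict to pure states via Result~1, reduce to extremizing a quadratic over the $d=3$ isocoherent ellipse of Eq.~\eqref{eq: ellipse equation} — but the execution of the two-dimensional optimization is genuinely different. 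The paper first rescales coordinates ($X_i=\sqrt{\Delta\epsilon^d_{0,i}}\,v_i$) so that the objective becomes the squared distance from the origin, and then locates the extremal points as tangency points of origin-centered circles with the scaled ellipse by matching unit normals (Eqs.~\eqref{eq: touched_outside}--\eqref{eq: satisfy_ellipse}), solving directly for $\tilde y_0$ and $\tilde x_\pm$. You instead keep the unscaled ellipse, exploit its symmetry about the line $v_1=v_2$ to diagonalize the quadratic form, and convert the problem into single-variable calculus on a trigonometric parametrization. The paper's route buys a clean geometric picture (max/min radius of tangent circles) and shorter algebra because the extremal points are written explicitly; your route buys a self-contained count of critical points (two, once $\sigma_+>1$ is excluded) and a transparent treatment of the degenerate endpoints $\mathcal{C}=0$ and $\mathcal{C}=2$, at the cost of heavier nested-radical bookkeeping at the end. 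One shared loose end, present in the paper's proof as well, is that the ellipse of Eq.~\eqref{eq: ellipse equation} can in principle contain points with negative coordinates, which do not correspond to states of coherence $\mathcal{C}$; neither argument explicitly verifies that the extremizers lie in the non-negative orthant (they do), so this is not a defect relative to the paper.
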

\begin{proof}
Before we begin the proof we would like to recall that for $d=3$, and for diagonal Hamiltonian of the form given in Eq.~\eqref{eq: diagonal_hamiltonian}, one have $\xi^p_{3}=\xi_{3}$. From the proof of Theorem~1, we know that for an arbitrary dimension $d$, the quantity $\xi_d^p(\mathcal{C})$ can be obtained by computing the difference between the distances from the origin to the closest and farthest points on the scaled $(d-1)$-dimensional higher dimensional ellipse. In the case of $d = 3$, this higher dimensional ellipse reduces to a two-dimensional ellipse.

In such a case the minimum (maximum) distance will be radius of a circle which just touched by the ellipse at a point from outside (inside) . Let the corresponding points where the circle just touched the ellipse be $(\tilde{x},\tilde{y})$. At this point, the unit normal vector to the both ellipse and the circle will be same. Using this fact, we can readily get an equation for $(\tilde{x},\tilde{y})$ as
\begin{equation}\label{eq: touched_outside}
    \frac{\tilde{y}}{\tilde{x}}=\frac{2\tilde{y}+\frac{\tilde{x}}{\sqrt{2}}-\sqrt{1+\mathcal{C}}}{\tilde{x}+\frac{\tilde{y}}{\sqrt{2}}-\sqrt{\frac{1+\mathcal{C}}{2}}}.
\end{equation}
Now, $(\tilde{x},\tilde{y})$ lies on the scaled ellipse, so it must satisfy the scaled ellipse equation. Thus we have
\begin{equation}\label{eq: satisfy_ellipse}
    \frac{\tilde{x}^{2}}{2}+\tilde{y}^{2}+\frac{\tilde{x}\tilde{y}}{\sqrt{2}}-\sqrt{1+\mathcal{C}}\left(\frac{\tilde{x}}{\sqrt{2}}+\tilde{y}\right)+\frac{\mathcal{C}}{2}=0.
\end{equation}
From these two equations, one  can eliminate $\tilde{x}$ and find solution corresponding to $\tilde{y}$ as $$\tilde{y}_0=\frac{\sqrt{1+\mathcal{C}}-\sqrt{1-\mathcal{C}/3}}{2},$$ 
Note that in deriving this we ignored the unphysical solution corresponding to $\tilde{y}\geq 1$ . Putting $\tilde{y}_0$ in Eq.~\eqref{eq: satisfy_ellipse}, one can get two solutions of $\tilde{x}$ as
$$\tilde{x}_{\pm}=\frac{\sqrt{1+\mathcal{C}}+\sqrt{1-\mathcal{C}/3}}{2\sqrt{2}}\pm\sqrt{\frac{1-\mathcal{C}+\sqrt{1+\mathcal{C}}\sqrt{1-\mathcal{C}/3}}{4}}.$$

It is easy to check that the maximum distance from the origin to the scaled ellipse corresponds to the point $(\tilde{x}_{+},\tilde{y}_0)$ and the minimum distance corresponds to the point $(\tilde{x}_{-},\tilde{y}_0)$. 
Thus we have the CCMW as
\begin{equation*}
\begin{split}
\xi_{3}&=(\tilde{x}_{+}^{2}+\tilde{y}^{2}_0)-(\tilde{x}_{-}^{2}+\tilde{y}^{2}_0)\\
&=\sqrt{\left(1+f_{3}+\frac{\mathcal{C}}{3}\right)\left(1+f_{3}-\mathcal{C}\right)}.
    \end{split}
\end{equation*}
This completes the proof of Theorem 3.
\end{proof}

Our numerical analysis also reveals that the coherence-conserving unitary achieving the CCMW is state-independent; it preserves the coherence of every state in a given dimension. Motivated by this discovery, in the next section we investigate which states, under these universal coherence-preserving unitaries yield zero extractable energy when coherence is fixed in the energy basis. The detailed analysis and classification of these
``isocoherent passive” states are presented in the subsection below.

\subsubsection{Passive state for diagonal Hamiltonian}\label{sec: passive_state}
 In this section we discuss the form of \emph{isocoherent passive} states. We define an isocoherent passive state \(\rho_p\) in the Hilbert space \(\mathcal{H}_d\) as an initial state of a quantum battery that has fixed coherence $\mathcal{C}$ in the energy eigenbasis and from which no energy can be extracted by coherence-preserving unitary operations of the form

\begin{equation}\label{eq: coherence_preserving_all}
    \bar{U}_{\mathcal{C}} = \sum_{i=0}^{d-1} e^{\mathrm{i}\,\omega_i}\,\ketbra{p(i)}{i},
\end{equation}
in the coherence basis \(\{\ket{i}\}\), where \(p(i)\) denotes a permutation of the basis indices, and each \(\omega_i\) ranges over \([0,2\pi)\).

Note that such unitaries are independent of the input state. In other words, they preserve the coherence of all initial states within a given dimension. In our numerical analysis (Sec.~\ref{subsection: diagonal hamiltonian}), we observed that the optimal unitary achieving the CCMW exactly matches the form presented in Eq.~\eqref{eq: coherence_preserving_all}. This inspired us to restrict attention to this class of coherence‑preserving unitaries and to search for states of fixed coherence from which no energy can be extracted when using those unitaries. Below we derive the form of these ``incoherent passive'' states.

Let the Hamiltonian of the \(d\)-dimensional battery be \(\bar{H}_d\), as defined in Eq.~\eqref{PHD}. Now consider the quantities \(\Tr\bigl[\bar{H}_d\,\rho_p\bigr]\) and \(\Tr\bigl[\bar{H}_d\,\bar{U}_c\,\rho_p\,\bar{U}_c^\dagger\bigr]\). We can write:

\[
\Tr\bigl[\bar{H}_d\,\rho_p\bigr] = N\,\Tr\bigl[h_d\,\rho_p\bigr] + \epsilon_0^d,
\]

\[
\Tr\bigl[\bar{H}_d\,\bar{U}_c\,\rho_p\,\bar{U}_c^\dagger\bigr] = N\,\Tr\bigl[\bar{U}_c^\dagger\,h_d\,\bar{U}_c\,\rho_p\bigr] + \epsilon_0^d.
\]

Here we  define
$$h_{d}=\frac{\bar{H}_{d}-\epsilon^{d}_0\mathbb{I}_{d}}{N}, \quad N=\Tr\left[\bar{H}_{d}-\epsilon^{d}_0\mathbb{I}_{d}\right].$$ 
Where $\epsilon_0$ denotes the minimum energy of the Hamiltonian $\bar{H}_{d}$ and $\mathbb{I}_{d}$ denotes $d$ dimensional identity matrix. Note $h_{d}$ has all the properties of a density matrix. Furthermore if $\bar{U}_{c}$ preserves coherence of any state, so does $\bar{U}_{c}^{\dag}$. So $\bar{U}_{c}^{\dag}h_{d} \bar{U}_{c}$ has same coherence as $h_{d}$. But $h_{d}$ is diagonal and the coherence basis is the eigenbasis of the Hamiltonian. So, $\bar{U}_{c}^{\dag}h_{d} \bar{U}_{c}$ is the same as $h_{d}$ with the diagonal entries just permuted. Then $$\Tr\left[\bar{H}_{d}\bar{U}_{c} \rho_{p}\bar{U}_{c}^{\dag} \right]=\epsilon^{d}_0+N\sum_{i=1}^{d}\left[h_{d}\right]_{p(i)}\left[\rho_{p}\right]_{i}.$$ Here $\left[h_{d}\right]_{i}$ is $i$-th diagonal element of $h_{d}$, $\left[\rho_{p}\right]_{i}$ is $i$-th diagonal element of $\rho_{p}$ and $p(i)$ indicates a permutation of $i$-th index. So $\rho_{p}$ is passive if, for all $\bar{U}_{\mathcal{C}}$, $\rho_{p}$ must satisfy 

$$\operatorname{Tr}[\rho_{p} \bar{H}_{d}] \le \operatorname{Tr}[\bar{U}_{\mathcal{C}}\,\rho_{p}\,\bar{U}_{\mathcal{C}}^{\dagger} \bar{H}_{d}]$$ which implies 

$$\sum_{i=1}^{d}\left[h_{d}\right]_{i}\left[\rho_{p}\right]_{i}\leq \sum_{i=1}^{d}\left[h_{d}\right]_{p(i)}\left[\rho_{p}\right]_{i}$$ 
for any permutations of indices. This is possible when the diagonal elements of \(\rho_p\) and those of \(h_d\) are ordered in opposite directions (see Theorem 368 in~\cite{hardy1934inequalities}). Specifically, if 
$
[h_d]_{i} \ge [h_d]_{j}$,
then \(\rho_p\) must satisfy
$[\rho_p]_{i} \le [\rho_p]_{j}$.

So, for a given Hamiltonian \(\bar{H}_d\) and coherence-preserving unitaries \(\bar{U}_{\mathcal{C}}\), which conserve coherence of any state in the energy eigenbasis, the form of the isocoherent passive state \(\rho_p\) must satisfy the following condition: when expressed in the energy eigenbasis, the diagonal elements of \(\rho_p\) must be ordered oppositely to those of the Hamiltonian. That is, higher-energy levels must have smaller or equal populations than lower-energy levels.

Equivalently, if the energy levels satisfy \(\epsilon^{d}_i \le \epsilon^{d}_j\), then for isocoherent passive states we require $
[\rho_p]_{i} \ge [\rho_p]_{j}.
$
This ensures no energy can be extracted from \(\rho_p\) under coherence-preserving unitary of the form \(\bar{U}_{\mathcal{C}}\).

Thus, our analysis of higher-dimensional systems with coherence fixed in the energy eigenbasis is complete. In the next section, we move beyond the energy eigenbasis and explore energy extraction from higher-dimensional systems when coherence is defined in a different basis. A detailed analysis is presented below.

\subsection{When the Hamiltonian has only Off-diagonal Elements in Coherence Basis} \label{sec: off_diagonal_hamiltonian}
In the previous sections, we have restricted ourselves to the Hamiltonians that are diagonal in the basis in which the coherence is given. In this section, we will discuss cases involving Hamiltonians with off-diagonal elements for higher-dimensional systems. 

As in previous sections, consider the coherence basis is $\ket{i}$, for $i=0,1,2,\ldots (d-1)$. For the sake of calculational simplicity, we take the Hamiltonian 
\begin{equation}\label{eq: non-diagonal_hamiltonian}
    J^{d}=\alpha_{1} J^{d}_{x}+\alpha_{2} J^{d}_{y}.
\end{equation}
Here $J^{d}_{x}$, $J^{d}_{y}$ is the generalization of the angular momentum component in $d-$ dimension. When written in the coherence basis $J^{d}_{x}$, $J^{d}_{y}$ have the form
\begin{align}\label{eq: jx_and_jy}
    J_x^d&=\sum_{i=1}^{d-1}\left(\ketbra{i}{i-1}+\ketbra{i-1}{i}\right) \notag \\
    J_y^d&=\sum_{i=1}^{d-1}I\left(\ketbra{i}{i-1}-\ketbra{i-1}{i}\right). \notag \\
\end{align}
Clearly, $J^{d}_{x}$, $J^{d}_{y}$ have only off-diagonal elements in the coherence basis. This $J^{d}$ can also be written as
\begin{equation}\label{eq: Jd}
    J^{d}=\alpha\sum_{i=1}^{d-1}\left[\exp\left(I\phi\right)\ketbra{i}{i-1}+h.c.\right],
\end{equation}
where $\alpha=\sqrt{\alpha^2_1+\alpha^2_2}$ and $\phi=\arg (\alpha_1+I\alpha_2)$.
Thus $J^d$ is suitable for our analysis considering Hamiltonian having only off-diagonal elements in the coherence basis.

 Note, for $d=2$, described in Corollary 1, $\xi^p_{2}(\mathcal{C})=\xi_{2}(\mathcal{C})$. Now, $\xi^p_{2}(\mathcal{C})$ have the exact form as given in Eq.~\eqref{eq: xi2}, with the diagonal elements $h_1=h_3=0$ of $J^2$ and its off-diagonal element is $h_2=\alpha$, in the coherence basis. We then analyze the qutrit case. For the qutrit battery, using the Hamiltonian in Eq.~\eqref{eq: non-diagonal_hamiltonian} with $d=3$, we have employed the exact same method, as discussed in the subsection~\ref{subsection: diagonal hamiltonian} for numerically evaluating unitarily extracted CCMW, using the ISRES algorithm of NLopt. Our numerical analysis revealed that even for this case considering pure states is sufficient for obtaining the unitarily extracted CCMW, i.e $\xi_{3}\left(\mathcal{C}\right)=\xi_{3}^p\left(\mathcal{C}\right)$. Equipped with this fact, we give a closed-form expression of unitarily extracted CCMW as follows.

\begin{figure}
    \centering
    \hspace{-1.5cm}
    \includegraphics[scale=0.4]{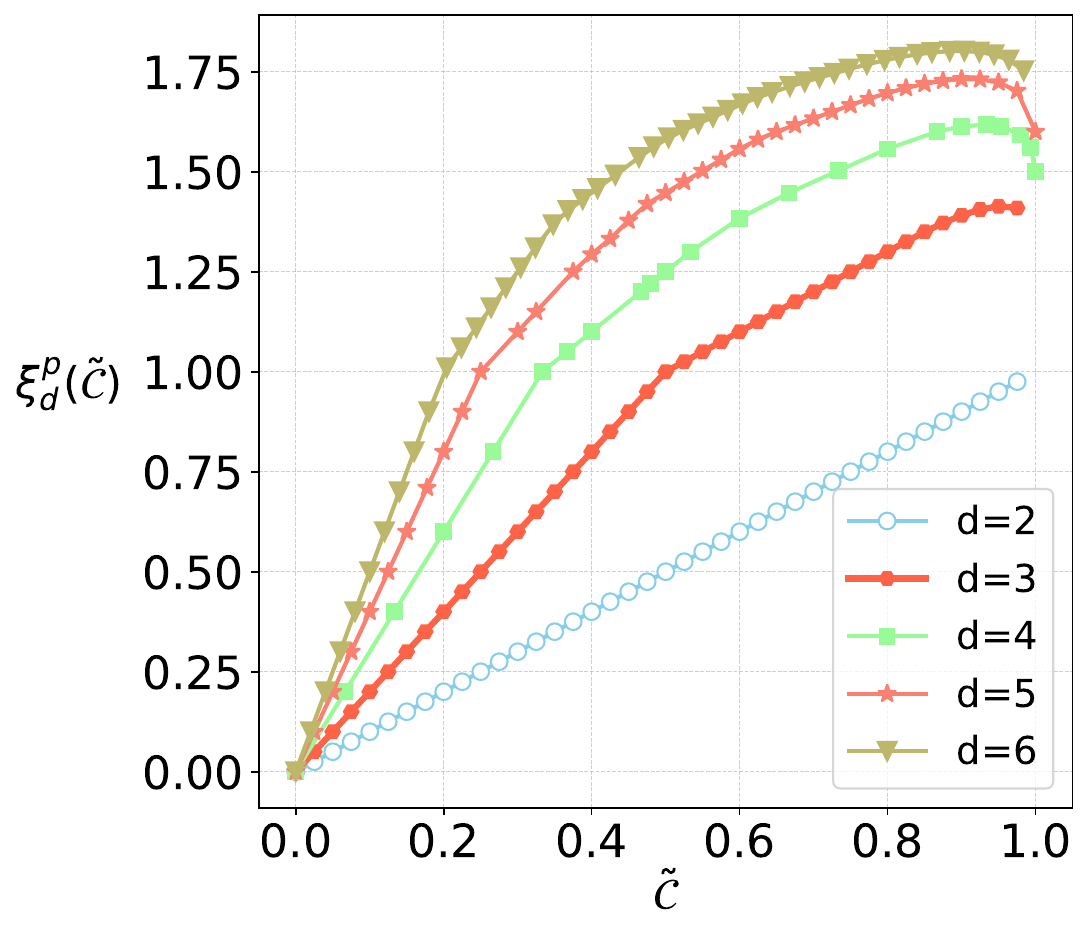}
    \caption{\textbf{Variation of $\xi^p_{d}(\mathcal{\tilde{C}})$ with fixed value of scaled coherence $\tilde{C}$, for qudit systems}. The figure depicts the variation of $\xi^p_{d}(\mathcal{\tilde{C}})$ with the scaled coherence $\mathcal{\tilde{C}}\in[0,1]$, for battery dimension $d=2,3,4,5,6$, plotted in different colors. As it can be seen in the plot that for $d=2$, $\xi^p_{d}(\mathcal{\tilde{C}})$ increases with increase in $\mathcal{\tilde{C}}\in[0,1]$. For $d \geq 3$, the plots depicts a generic feature exhibiting that $\xi^p_{d}(\mathcal{\tilde{C}})$ mostly increases with $\mathcal{\tilde{C}}$, followed by a slight decrease as the coherence reaches it's maximum value. The vertical axis are plotted in units of $\alpha$ which has dimension of energy and the horizontal axis is dimensionless.}

    \label{fig: jz0ergotropy}
\end{figure}

\begin{equation}\label{eq: closed_non-diagonal_qutrit}
    \xi_{3}\left(\mathcal{C}\right)=
    \begin{cases}
    2\alpha\mathcal{C}, & \text{if } 0 \leq \mathcal{C}\leq 1 \\
    \alpha(\mathcal{C}+1), & \text{if } 1 \leq \mathcal{C}\leq \frac{5}{3} \\
    2\alpha\left(\mathcal{C}-f(\mathcal{C})\right) & \text{if } \frac{5}{3} \leq \mathcal{C}\leq 2.
    \end{cases}
\end{equation}
Here $$ f(\mathcal{C})=\frac{2}{9}\left(\sqrt{1+\mathcal{C}}-\sqrt{1-\frac{\mathcal{C}}{2}}\right)^{2}.$$
The detail derivation of this expression is given in Appendix~\ref{appendix: non-diagonal qutrit}. 

Following the same argument as in Sec.~\ref{subsection: diagonal hamiltonian}, for dimension independent visualization we rescale the coherence to \(\tilde{\mathcal{C}}\), such that \(\tilde{\mathcal{C}} \in [0,1]\). We plot  $\xi_{3}(\tilde{\mathcal{C}})$, which is equivalent to $\xi_{3}^p(\tilde{\mathcal{C}})$, in units of $\alpha$ for various values of the scaled coherence $\tilde{\mathcal{C}}\in[0,1]$ in Fig.~\ref{fig: jz0ergotropy} (in tomato red). As the plot makes clear, the CCMW generally increases with coherence, with only a small decrease near the maximum coherence value. This indicates that when the Hamiltonian is such that it has only off-diagonal elements in coherence basis, the CCMW can enhance with increase in coherence which is in contrast to that of the diagonal Hamiltonian case discussed in Sec.~\ref{subsection: diagonal hamiltonian}. Our numerical analysis also reveals that, unlike the case of a diagonal Hamiltonian, the phases of the off‑diagonal elements in the initial state play a crucial role in determining the CCMW.

\begin{figure*}[t]
    \centering
    \begin{minipage}[b]{0.245\textwidth}
         \centering
         \includegraphics[width=\textwidth]{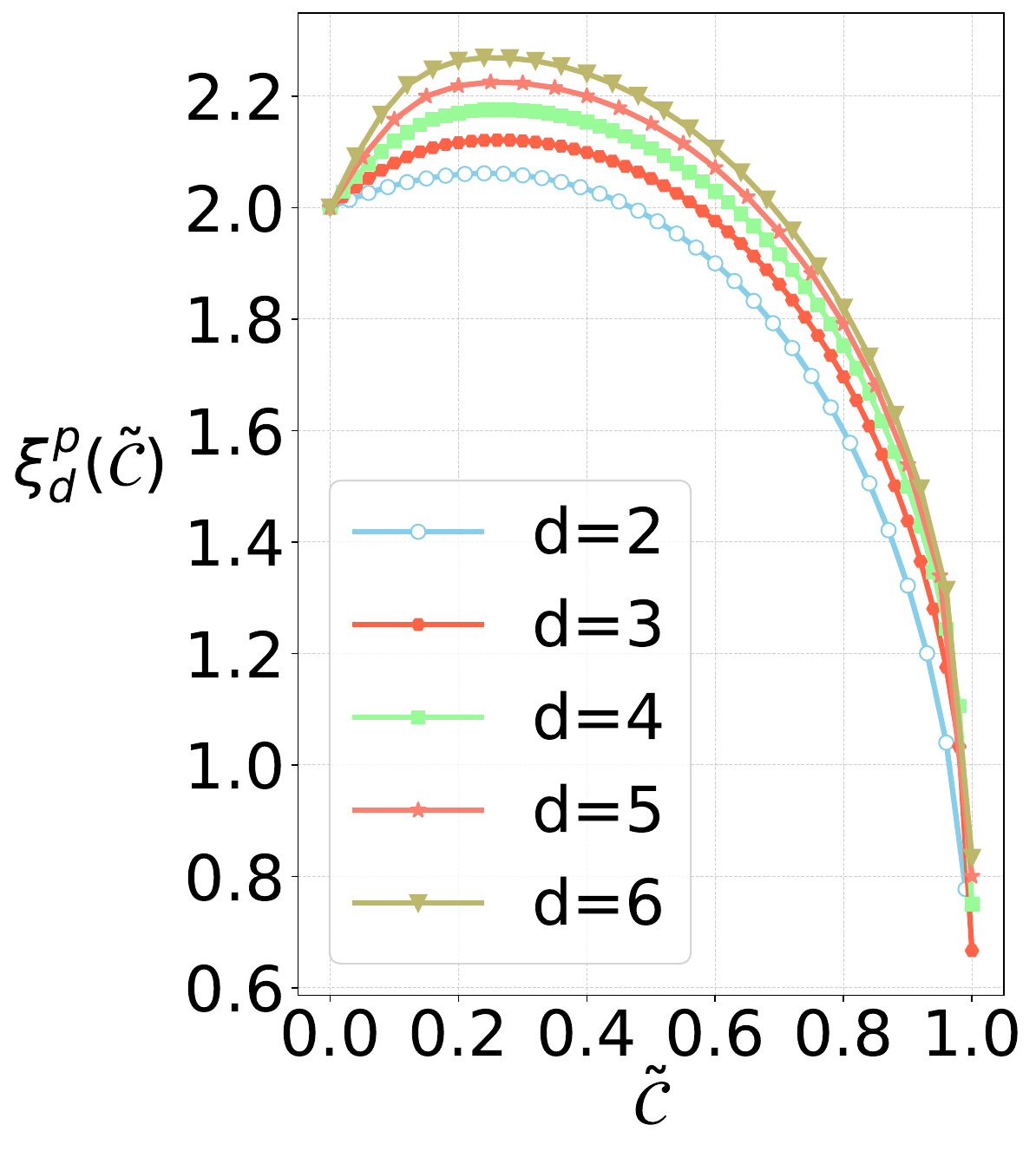}
     \end{minipage}
     \hfill
     \begin{minipage}[b]{0.245\textwidth}
         \centering
         \includegraphics[width=\textwidth]{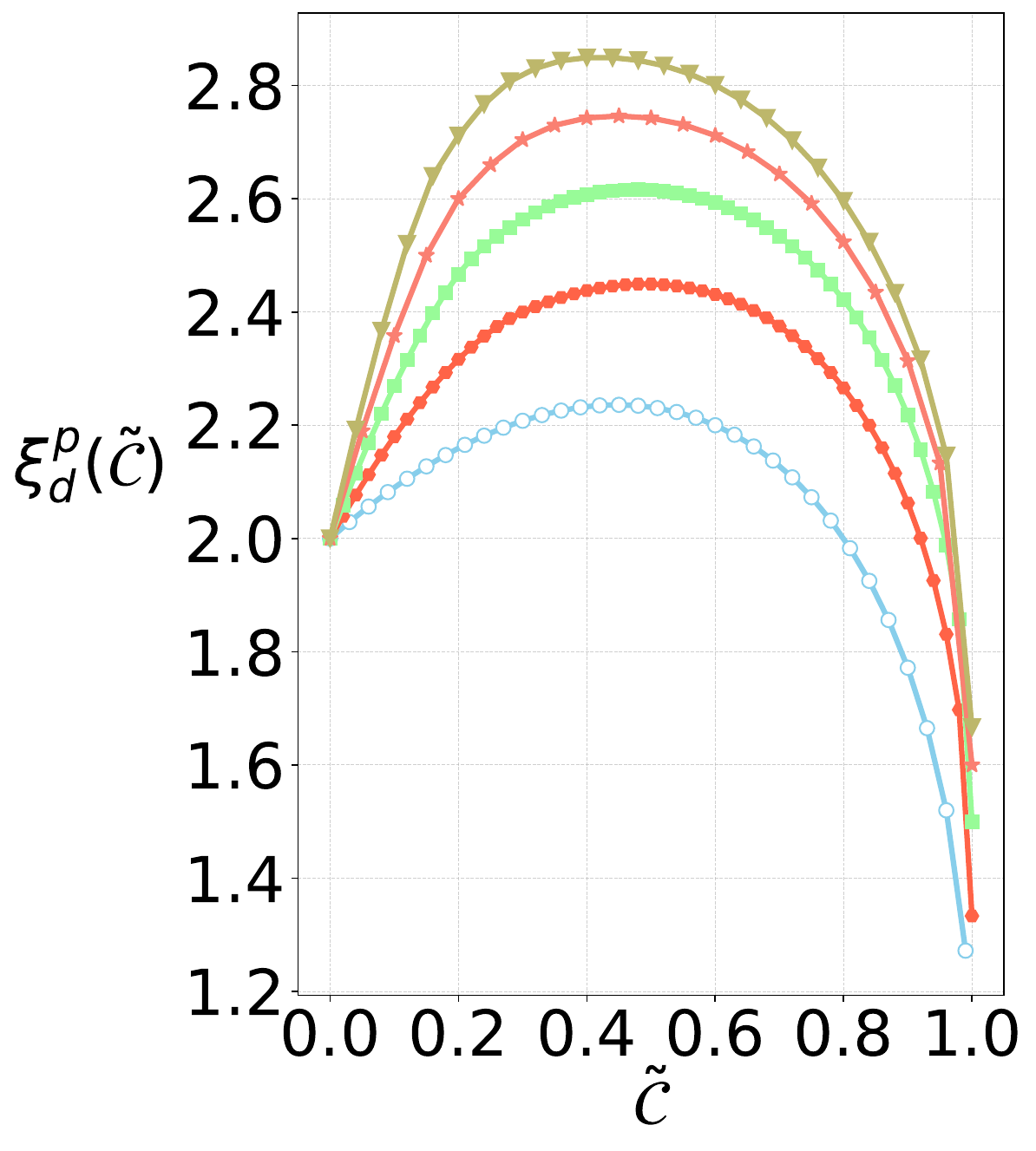}
     \end{minipage}
     \hfill
     \begin{minipage}[b]{0.245\textwidth}
         \centering
         \includegraphics[width=\textwidth]{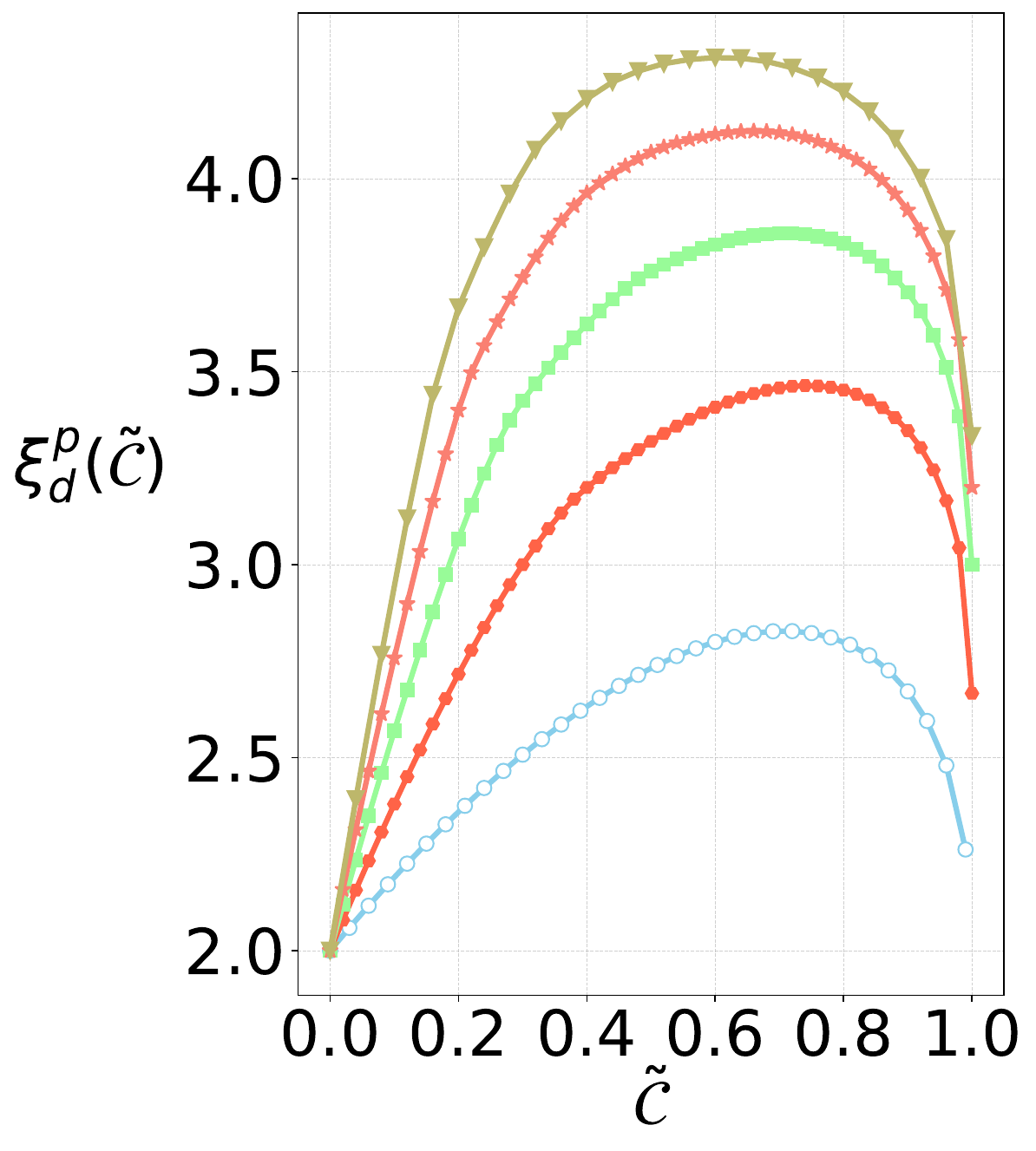}
     \end{minipage}
     \hfill
     \begin{minipage}[b]{0.245\textwidth}
         \centering
         \includegraphics[width=\textwidth]{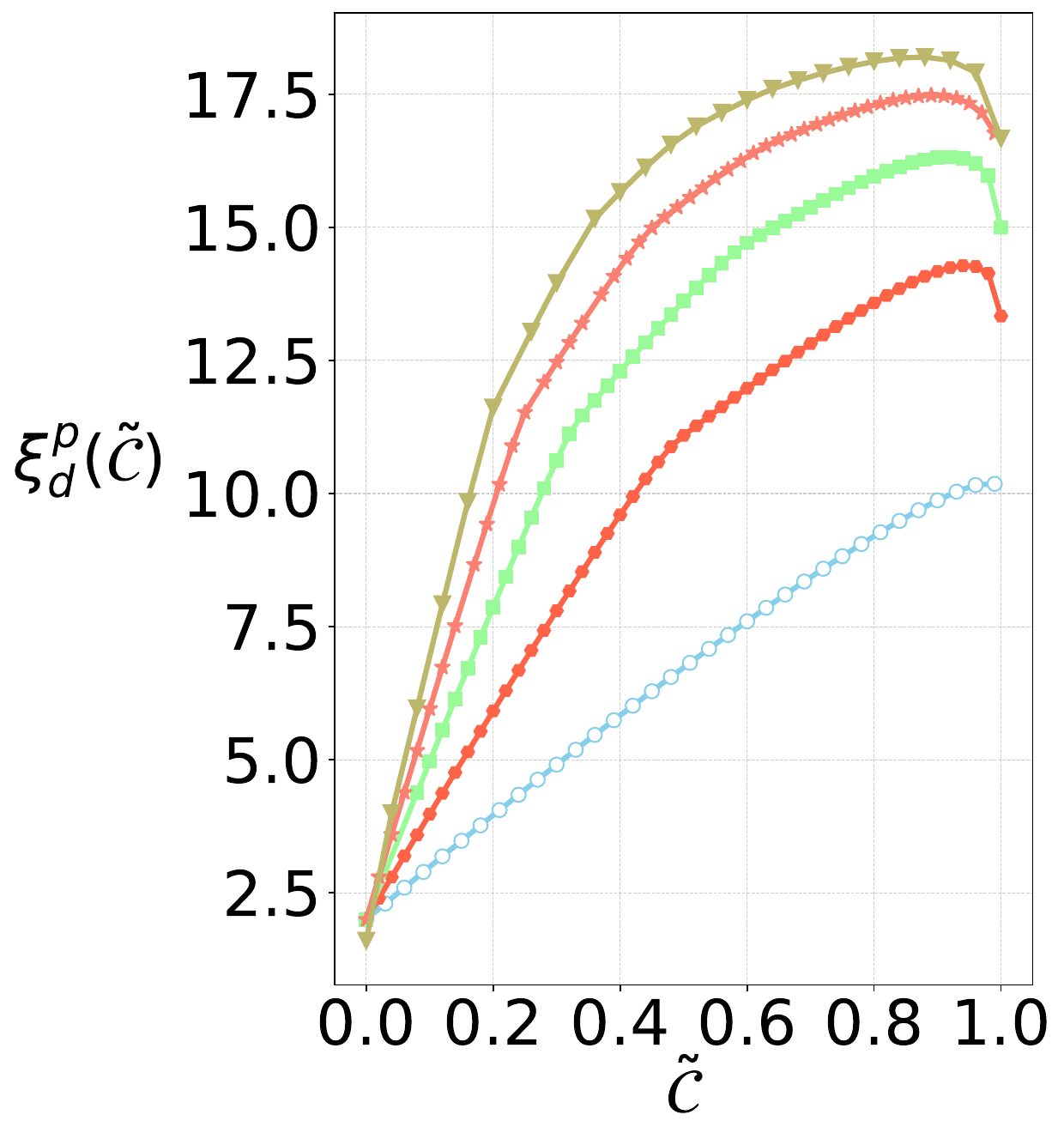}
     \end{minipage}
     \caption{\textbf{Coherence dependence of \(\xi^p_{d}(\tilde{\mathcal{C}})\) for various \(\bar{\alpha}\) values.} The figures show how \(\xi^p_{d}(\tilde{\mathcal{C}})\) varies with the scaled coherence \(\tilde{\mathcal{C}} \in [0,1]\) for dimensions \(d = 2, 3, 4, 5, 6\), with coherence fixed in the basis where the Hamiltonian is given by Eq.~\eqref{off}. The four panels correspond to \(\bar{\alpha} = 0.25, 0.5, 1,\) and \(5\). For small \(\bar{\alpha}\) (0.25 and 0.5), shown in the left panels, \(\xi^p_{d}(\tilde{\mathcal{C}})\) decreases as coherence increases. In contrast, for larger \(\bar{\alpha}\) values (1 and 5), shown in the right panels, \(\xi^p_{d}(\tilde{\mathcal{C}})\) increases with coherence. This left-to-right transition reflects the increasing contribution of off-diagonal elements in the Hamiltonian as \(\bar{\alpha}\) increases, shifting the system's behavior from that of a Hamiltonian diagonal in the coherence basis to one dominated by off-diagonal structure. The vertical axis is in units of energy, and the horizontal axis is dimensionless.}

    \label{fig: mixdiagonals}
\end{figure*}

Since so far our analysis shows that restricting the calculation to pure states suffices to attain the CCMW, we continue using this restriction to reduce numerical complexity, which grows with dimension. Therefore, for higher dimensions, specifically $d=4,5,6$, we numerically evaluate $\xi^p_{d}(\mathcal{C})$ as follows

We consider initial and the final states as in Eq.~\eqref{eq: pure_in_state} and Eq.~\eqref{eq: pure_out_state}. Next we  maximized $$\bra{\psi^{d}_{\text{in}}}J^{d}\ket{\psi^{d}_{\text{in}}}-\bra{\psi^{d}_{\text{f}}}J^{d}\ket{\psi^{d}_{\text{f}}}$$ over $\{v_{i},\theta_i\}$ and $\{w_{i},\phi_i\}$ such that both $\{v_{i}\}$ and $\{w_{i}\}$ independently satisfy the normalization condition in Eq.~\eqref{eq: normalization condition} and the coherence condition in Eq.~\eqref{eq: fixed coherence condition1}. We perform the maximization numerically. The plot of $\xi^{p}_{d}\left(\mathcal{C}\right)$ with the scaled coherence $\tilde{\mathcal{C}}$ is presented in Fig.~\ref{fig: jz0ergotropy}, for $d=\{4,5,6\}$. From the plot, it can be seen that $\xi^{p}_{d}\left(\tilde{\mathcal{C}}\right)$ increases mostly with coherence and exhibits the same behavior for the $d = 3$ case given in Eq.~\eqref{eq: closed_non-diagonal_qutrit}.

Since in general $\xi^{p}_{d}(\tilde{\mathcal{C}})$ provides a valid lower bound on the CCMW $\xi_{d}(\tilde{\mathcal{C}})$, an increase in $\xi^{p}_{d}(\tilde{\mathcal{C}})$ with $\tilde{\mathcal{C}}$ suggests that the CCMW may also increase with coherence. This behavior stands in stark contrast to the case where coherence is fixed in the Hamiltonian’s eigenbasis, in which the CCMW strictly decreases with increasing coherence. Our numerical analysis thus reveals a basis-dependent response of the CCMW. Specifically, when coherence is fixed in a basis where the Hamiltonian exhibits off-diagonal elements, the CCMW can increase with increasing coherence, unlike the case with energy-basis coherence.


In the next section we explore yet another scenario when the Hamiltonian has both diagonal and off-diagonal elements in the coherence basis. The detail analysis is presented below.

\subsection{When the Hamiltonian has Both Diagonal and  Off-diagonal Elements in  Coherence Basis} \label{sec: non-diagonal_&_diagonal}

In this section we analyze the case in which the Hamiltonian includes both diagonal and off-diagonal elements in the coherence basis \(\{\ket{i}\}\). Specifically, we consider the Hamiltonian
\[
\bar{J}^{d} = \alpha_{1} J^{d}_{x} + \alpha_{2} J^{d}_{y} + \alpha_{3} J^{d}_{z},
\]
where \(J^{d}_x\) and \(J^{d}_y\) are defined in Eq.~\eqref{eq: jx_and_jy}, and
\[
J^{d}_z = \frac{2}{d-1}\sum_{i=0}^{d-1}\left(i - \frac{d-1}{2}\right)\ketbra{i}{i}.
\]  

Using the expressions of $J^{d}_x$,$J^{d}_y$ and $J^{d}_z$ the $\bar{J}^{d}$ can be written as
\begin{align}
    \bar{J}^{d}=J^{d}+\frac{2\alpha_3}{d-1}\sum_{i=0}^{d-1}\left(i - \frac{d-1}{2}\right)\ketbra{i}{i}.
    \label{off}
\end{align}
The $J^{d}$ is same as in Eq.~\ref{eq: Jd}.
Note that when \(\alpha_{3}=0\), \(\bar{J}^{d}\) reduces to the purely off-diagonal Hamiltonian \(J^{d}\), while for \(\alpha_{3} \neq 0\), \(\bar{J}^{d}\) contains both diagonal and off-diagonal elements in the coherence basis. 

 For $d=2$, from the analysis presented in Corollary 1, we have $\xi^p_{2}(\mathcal{C})=\xi_{2}(\mathcal{C})$. Also, $\xi^p_{2}(\mathcal{C})$ have the exact form as given in Eq.~\eqref{eq: xi2}, with $h_1$, $h_3$ being the diagonal elements of $\bar{J_2}$ and $h_2$ being the off-diagonal element of $\bar{J_2}$, when expressed in the coherence basis. For higher dimension we employ numerical optimization technique.
 
We begin with the numerical analysis of the qutrit case, using the same procedure outlined in Sec.~\ref{subsection: diagonal hamiltonian}, but with \(J^{d}_z\) replaced by \(\bar{J}_d\). Our numerical data confirm that for \(d=3\) and the Hamiltonian \(\bar{J}_d\), it is sufficient to consider only pure input states to achieve the CCMW. Thus even for this set up we have $\xi^p_{3}(\mathcal{C})=\xi_{3}(\mathcal{C})$

Encouraged by this finding, we extend our analysis to higher dimensions by restricting the optimization to pure initial states. Thus for dimensions \(d>3\), we numerically compute \(\xi^p_{d}(\mathcal{C})\) for various fixed input coherence values \(\mathcal{C}\). The steps followed for the numerical optimization is similar to that presented in Sec.~\ref{sec: off_diagonal_hamiltonian}, for higher dimension $d>3$, with the Hamiltonian $J^d$ now being replaced by $\bar{J}^d$.

In Fig.~\ref{fig: mixdiagonals} we plot \(\xi^p_{d}(\tilde{\mathcal{C}})\) against the scaled coherence \(\tilde{\mathcal{C}}\) for dimensions \(d = 2,3,4,5,6\) and for various values of \(\bar{\alpha} = \alpha/\alpha_3\). We set \(\alpha_3 = 1\) and consider \(\bar{\alpha} = 0.25, 0.5, 1, 5\). The figure shows that the behavior of \(\xi^p_{d}(\tilde{\mathcal{C}})\) depends on \(\bar{\alpha}\). For small \(\bar{\alpha}\) values (0.25 and 0.5), the two leftmost plots show that \(\xi^p_{d}(\tilde{\mathcal{C}})\) decreases as coherence increases. In contrast, for larger values (1 and 5), shown in the two rightmost plots, \(\xi^p_{d}(\tilde{\mathcal{C}})\) increases with coherence.

This difference arises because, with \(\alpha_3 = 1\), the parameter \(\bar{\alpha}\) controls the strength of the off‑diagonal elements in \(\bar{J}^{d}\) (see Eq.~\eqref{off}). Smaller \(\bar{\alpha}\) makes \(\bar{J}^{d}\) more similar to a diagonal Hamiltonian, mirroring the behavior described in Sec.~\ref{subsection: diagonal hamiltonian} where \(\xi^p_{d}(\tilde{\mathcal{C}})\)  decreases with coherence. Larger \(\bar{\alpha}\) imparts a more off-diagonal character, matching the trend in Sec.~\ref{sec: off_diagonal_hamiltonian} where \(\xi^p_{d}(\tilde{\mathcal{C}})\)  increases with coherence.

Further in this case one can easily compute the scaling of $\xi^p_{d}(\tilde{\mathcal{C}})$ with dimension $d$, for $\tilde{\mathcal{C}}=1$. Interestingly for $\tilde{\mathcal{C}}=1$, $\xi^p_{d}(\tilde{\mathcal{C}})$ scales with dimension as $$\xi^p_{d}(1)=4\alpha\left(1-{1}/{d}\right).$$

\section{Conclusion} \label{sec: conclusion}
In this work, we explored the relationship between an intrinsic quantum feature, namely quantum coherence, and the extractable work from quantum systems. Specifically, we considered practical scenarios where the objective is to extract the maximum amount of work from quantum systems that possess a limited or fixed level of coherence, with the additional requirement that the extraction process preserves this coherence. This constraint is significant because coherence serves as a key resource in various quantum communication and information-processing tasks. Therefore, protocols that conserve coherence during energy extraction are, first of all,  practically meaningful, and secondly,  instrumental in establishing a theoretical relationship between the constrained coherence input and the corresponding extractable energy. Such a relationship is, in turn, potentially crucial for the development of quantum technologies, especially quantum batteries that must operate under realistic experimental constraints.

We refer to the extractable work, which is maximized over all quantum states (pure or mixed) with a fixed level of coherence and over all corresponding unitaries that preserve this level of coherence, as the coherence‑constrained maximal work. We investigated how CCMW depends on a given value of the initial quantum coherence of the quantum battery with respect to an arbitrary but fixed basis. We began by analyzing qubit systems and derived an exact analytical relationship between CCMW and the fixed input coherence.
Our results revealed that CCMW is always achieved by pure states of the battery. Furthermore, we demonstrated that the relationship between CCMW and coherence exhibits a strong dependence on the choice of basis in which the coherence is considered. In particular, when the fixed coherence is considered in the eigenbasis of the Hamiltonian, CCMW decreases as the initial coherence increases. In contrast, when the Hamiltonian has equal or zero diagonal elements and non-zero off-diagonal elements in the coherence basis, CCMW increases with increasing input coherence. The basis-dependent behavior in work extraction was also observed in higher-dimensional systems.

We performed numerical analysis for higher dimensional batteries. Our numerical results shown that when coherence is fixed in the energy eigenbasis, restricting to pure initial states with fixed coherence suffices to reach the CCMW even in higher dimensions. Building on this numerical finding, we derived exact analytical expressions for the qutrit battery and provided a general framework for extending such results to even higher dimensions. Additionally, within this framework, we identified passive states in arbitrary dimensions that possess fixed coherence and from which no energy can be extracted when coherence must be conserved. Finally, for cases where the Hamiltonian is non-diagonal in the coherence basis, our qutrit-based numerical analysis guided the derivation of a closed-form relation linking CCMW to the fixed input coherence. Throughout this study, we used the $l_1$ norm to quantify coherence.

\bibliography{refs}

\appendix
\section{CCMW for Qutrit Batteries considering Hamiltonian with only Off-diagonal Elements }\label{appendix: non-diagonal qutrit}
In this appendix, we calculate the CCMW extracted by coherence-conserving unitaries acting on pure states for qutrit batteries, specifically when the system Hamiltonian contains only off-diagonal elements in the coherence basis \(\{\ket{i}\}\).

Consider the initial and final states of the qutrit battery as $\ketbra{\psi_{\text{in}}}{\psi_{\text{in}}} =\rho_{\text{in}}\in\zeta^{\mathcal{C}}_{3}$ and $\ketbra{\psi_{\text{f}}}{\psi_{\text{f}}}=\rho_{\text{f}}\in\zeta^{\mathcal{C}}_{3}$ respectively, such that in the given coherence basis $\{\ket{0},\ket{1},\ket{2}\}$,  
\begin{align*}
    &\ket{\psi_{\text{in}}}=x_{0}\ket{0}+x_{1}e^{i\theta_1}\ket{1}+x_{2}e^{i\theta_2}\ket{2}, \notag \\
    &\ket{\psi_{\text{f}}}=y_{0}\ket{0}+y_{1}e^{i\phi_1}\ket{1}+y_{2}e^{i\phi_2}\ket{2}.
\end{align*} Here $0\leq\{x_0,x_1,x_2,y_0,y_1,y_2\}\leq1$ and $0\leq\{\theta_1,\theta_2,\phi_1,\phi_2\}\leq2\pi$. Also, to satisfy the normalization condition and coherence preserving condition $\{x_0,x_1,x_2\}$ and $\{y_0,y_1,y_2\}$ must satisfy 
\begin{align}\label{eq: conditions1}
    x^2+y^2+z^2=1,\quad \text{and} \notag \quad xy+yz+xz=\frac{\mathcal{C}}{2} \\
\end{align}
or,
\begin{align}\label{eq: condition2}
    x^2+y^2+z^2=1,\quad \text{and} \quad x+y+z=\sqrt{1+\mathcal{C}}
\end{align}
as discussed in details in subsection~\ref{subsection: diagonal hamiltonian}.
The Hamiltonian of the qutrit battery is given in Eq.~\eqref{eq: Jd}, namely, 
\begin{equation}
    J^{3}=\alpha e^{I\phi}\left(\ketbra{0}{1}+\ketbra{1}{2}\right)+h.c.
\end{equation}
Then the unitarily extracted CCMW becomes 
\begin{equation*}
    \xi_{3}\left(\mathcal{C}\right)=\max_{\{\rho_{\text{in}},\rho_f\} \in \zeta^{\mathcal{C}}_3} \Tr[J^{3} (\rho_{\text{in}} - \rho_f)]
\end{equation*}

Note that we can again separate the optimizations over initial and final state parameters, as
\begin{align}
\xi_{3}\left(\mathcal{C}\right) &= \max_{\theta_1,\theta_2 x_0,x_1,x_2} 2\alpha \Big[ 
     x_0x_1\cos\left(\phi+\theta_1\right)\notag \\ & \qquad \qquad\qquad \quad+ x_1x_2\cos\left(\phi+\theta_2-\theta_1\right)\Big] \nonumber\notag \\
    & - \min_{\phi_1,\phi_2, y_0,y_1,y_2}2\alpha\Big[y_0y_1\cos\left(\phi+\phi_1\right)\notag  \\&\qquad \qquad\qquad \quad+y_1y_2\cos\left(\phi+\phi_2-\phi_1\right) 
\Big].
\end{align}
After independently optimizing over $\{\theta_1,\theta_2,\phi_1,\phi_2\}$, as they do not depend on other parameters, we get
\begin{align*}
    \xi_{3}\left(\mathcal{C}\right)=2\alpha\max_{x_0,x_1,x_2}(x_0x_1+x_1x_2)+2\alpha\max_{y_0,y_1,y_2}(y_0y_1+y_1y_2).
\end{align*}
Now using the second equation of Eq.~\eqref{eq: conditions1} and noticing that the $x_j$ and $y_{j}$ are dummy variables of optimizations and are constrained to satisfy same constraint equations, namely Eq.~\eqref{eq: conditions1} we can write
\begin{align}\label{eq: final_in_xyz}
    &\xi_{3}\left(\mathcal{C}\right)=4\alpha\max_{x,z}(\frac{\mathcal{C}}{2}-xz)\notag \\\implies \quad&\xi_{3}\left(\mathcal{C}\right)=4\alpha\left[\frac{\mathcal{C}}{2}-\min_{x,z}(xz)\right].
\end{align} Here $\{x,z\}$ satisfies
\begin{equation}\label{eq: xz constrain}
    zx=\left[(x+z)-\frac{\sqrt{1+\mathcal{C}}}{2}\right]^2+\frac{\mathcal{C}-1}{4},
\end{equation}
which one can get by substituting $y$ from second part of Eq.~\eqref{eq: condition2} to the first part of Eq.~\eqref{eq: condition2}. 

Now we perform a change of variable: 
\begin{equation}\label{eq: pq_transformation}
    xz=q, \quad (x+y)=p.
\end{equation}
After this transformation, the equation Eq.~\eqref{eq: final_in_xyz} becomes 
\begin{equation}
    \xi_{3}\left(\mathcal{C}\right)=4\alpha\left[\frac{\mathcal{C}}{2}-\min_{q}(q)\right],
\end{equation} and the Eq.~\ref{eq: xz constrain} becomes 
\begin{equation}\label{eq: pq_constraint}
    q=\left(p-\frac{\sqrt{1+\mathcal{C}}}{2}\right)^2+\frac{\mathcal{C}-1}{4}.
\end{equation}

This equation describes a parabola in the \(p\)-\(q\) plane for a given coherence \(\mathcal{C}\). Our goal is to find the minimum \(q\) along this parabola. Note that \(p\) and \(q\) must also satisfy the constraints
\[
p^2 \ge 4q \quad\text{and}\quad q \ge 0,
\]
otherwise no real, positive solutions for \(x\) and \(z\) exist to satisfy the transformation in Eq.~\eqref{eq: pq_transformation}. We refer to the region defined by these inequalities as the valid solution region.

Thus, the task of calculating \(\xi_3(\mathcal{C})\) in Eq.~\eqref{eq: final_in_xyz} reduces to finding the minimum value of \(q\) that satisfies the parabola's equation in Eq.~\eqref{eq: pq_constraint} within this valid region.

For \(\mathcal{C} \le 1\), the parabola attains its minimum at a negative \(q\), which lies outside the valid region. Therefore, the minimum \(q\) in this case is zero.  
For \(1 \le \mathcal{C} \le {5}/{3}\), the parabola's minimum occurs at \(q = (\mathcal{C} - 1) / 4\), which lies within the valid region.  
For \(\mathcal{C} \ge {5}/{3}\), the parabola's minimum again falls outside the valid region. In this case, the minimum \(q\) occurs where the parabola intersects the boundary \(p^2 = 4q\), yielding  
\[
q = \frac{\left(\sqrt{1+\mathcal{C}} - \sqrt{1 - \tfrac{\mathcal{C}}{2}}\right)^2}{9}.
\]

Combining these three regimes, we obtain
\[
\xi_3(\mathcal{C}) = 
\begin{cases}
2\alpha\,\mathcal{C}, & 0 \le \mathcal{C} \le 1, \\
\alpha\,(\mathcal{C} + 1), & 1 \le \mathcal{C} \le \tfrac{5}{3}, \\
2\alpha\left(\mathcal{C} - f(\mathcal{C})\right), & \tfrac{5}{3} \le \mathcal{C} \le 2,
\end{cases}
\]
where
\[
f(\mathcal{C}) = \frac{2}{9}\left(\sqrt{1 + \mathcal{C}} - \sqrt{1 - \tfrac{\mathcal{C}}{2}}\right)^2.
\]
Thus we have the exact expression for CCMW, for the qutrit case when the Hamiltonian poses only off-diagonal elements in the coherence basis.
\end{document}